\documentclass[12pt,a4paper]{article}

\usepackage{cmbright,fullpage}
\usepackage{amsmath}%
\usepackage{amssymb}%
\usepackage{epsfig}%
\usepackage{amsthm}%
\usepackage{dsfont}%
\usepackage{natbib}
\usepackage{url}%
\usepackage{mathrsfs}%
\usepackage{braket}%
\usepackage{accents}
\usepackage{booktabs}
\usepackage{eurosym}
\usepackage{float}
\usepackage{placeins}

\usepackage{amssymb,latexsym,amsmath,soul,mathptmx,mathrsfs,verbatim}     
\usepackage[amsthm,mathsfit]{libertinust1math}
\newtheorem{rem}{\bf{Remark}}

\advance\voffset by 0pt
\advance\hoffset by 0pt
\renewenvironment{proof}[1][Proof]{\noindent\textbf{#1.} }{\ \rule{0.5em}{0.5em}}

\theoremstyle{plain}%

\newtheorem{cor}{Corollary}

\newtheorem{proposition}{Proposition}

\theoremstyle{definition}%
\newtheorem{definition}{Definition}

\renewenvironment{appendix}%
{\setcounter{section}{0}%
\numberwithin{equation}{section}%
}%
{}%
\begin{document}

  \title{Does the Market Anticipate? Can it? Should it? \thanks{The author thanks Prof. J. Thijssen, Dept. of Mathematics, University of York, for his patient advice.}
}
\author{KANGDA KEN WREN$^\mathbf{1}$\\
$^\mathbf{1}$k.k.wren@york.ac.uk\\Dept. of Mathematics, University of York}
\maketitle
\begin{abstract}
We explore a nuance to 'no arbitrage'
: it can be suboptimal to act upon an arbitrage immediately; in such cases optimised trading can suppress the anticipation of predictable risky outcomes, creating an apparent Status Quo Bias. This is shown through continuous-time asset-pricing under model- or event-risk. Unlike standard treatments, we allow {pre-horizon risk-outcome disclosures}; 
the technical challenges are overcome by results from the 'weak viability' and 'side/inside information' literature. The conflict between 'no arbitrage', 'information efficiency' and 'risk anticipation', and the role of the rate of 'signal-to-noise' versus that of 'current return', are exposed in a concrete, practically relevant, setting.
\end{abstract}
\emph{JEL Classification:} {C11, C12, C22, D81, G02, G12}
\newpage
\begin{center}
    {\huge Conflict-of-Interest Disclosure Statement}
\end{center}
\textbf{Author K.K.Wren}\\
I have nothing to disclose.
\newpage
\section
{Introduction}
Of the various notions of 'no arbitrage', the benchmark is NFLVR ('No Free Lunch with Vanishing Risk'), the basis for FTAP (Fundamental Theorem of Asset Pricing). The term 'information efficiency' commonly refers to how well prices incorporate 'the known' and 'the unknown'. To be clear, we call the second aspect 'anticipation', as distinct from the first, which will be referred to here as 'efficiency' exclusively.

For concreteness, our setting is continuous-time asset-pricing under model- or event-risk \(B\) whose outcome \(b\) or \(\overline{b}\) 
is disclosed at a \emph{finite} time (e.g. war? recession? CEO\(_\mathbf{0}\) or CEO\(_\mathbf{1}\)? 'status quo' or 'change'?). In our context, a 'model' is a probability law from a set (of such laws), and 'risk', the objective or subjective randomness of an impactful outcome. 

\subsection
{An Apparent Dichotomy of Viability and Efficiency}
Call a market \emph{\(B\)-unsure} if it is NFLVR given public data, and \emph{\(\{B=b\}\)-sure} if it is NFLVR under the \emph{\(\{B=b\}\)-enlarged} dataflow. Consider the viability of 1) 
 a \(B\)-unsure market to the \emph{\(b\)-insiders}, who are \(\{B=b\}\)-sure
; 2) 
a \(\{B=b\}\)-sure market to the \emph{\(b\)-outsiders}, who are \(B\)-unsure
.

In case-1, the \(B\)-unsure market offers \(b\)-insiders only {unscalable} arbitrage {with no sure-payoffs pre-disclosure (\cite{sideinfo1}, \cite{ankimk}, \cite{sideinfo2}): there is no NFLVR-viability overall, only NUPBR-viability\footnote{
"No Unbounded Profit with Bounded Risk" is the minimum condition on prices for maximum-utility problems to be well-posed (\cite{kk}, \cite{fonz1}, \cite{fonz2}, \cite{fonz3}).}
.

In case-2, any \(\{B=b\}\)-sure market, to \(b\)-outsiders, has 
 a \emph{non-equivalent absolutely continuous martingale measure} (nACMM), a market-type studied by \cite{DelbS} despite stating its 'obvious unviability': all \(\{B\not=b\}\)-contingent rights, worthless in such a market, face infinite demand from the \(b\)-outsiders.
 
Hidden above is a seeming dichotomy: viability favours \(b\)-outsiders, 
who seek cheap \(\{B\not=b\}\)-\emph{rights}, unlike \(b\)-insiders, who issue \(\{B\not=b\}\)-\emph{obligations} and manage \(\{B\not=b\}\)-\emph{liabilities} ('margin calls') till \(B\)-resolution. As such, real-world markets, having to be viable against \emph{any sceptics of any certainty}, would fail 'efficiency', or would they?

\subsection
{Asset-Pricing under Model- or Event-Risk}

Making do with 'working models' to be refined in due course is common and a necessity in practice. 
Attempts to embed this into the Rational Expectation framework of asset-pricing theories include \emph{Parameter Learning} (e.g. \cite{GTim}, \cite{CD}), \emph{Robust Control} (e.g. \cite{hs1}, \cite{hs2}), and \emph{Ambiguity Aversion} (e.g. \cite{ep95}, 
\cite{ep02}, \cite{miao}, \cite{hndbk}).

While fruitful in many ways, the impact of these on mainstream theory and practice has been limited, arguably, due to complexity and a lack of connection to day-to-day investment challenges: these works, due to certain technical complications, tend to 'shun' Risk-Neutral Equivalent (RNE) asset-pricing
\footnote{Early 
Ambiguity theory struggled with FTAP (\cite{ep95}), but its modern form and other approaches (e.g. \cite{ep02}, \cite{miao}), unified under Smooth Ambiguity, have no such issues (Proposition 3.6 of \cite{knoarb}). They need not, but do, avoid RNE pricing.} and exclude risks that resolve in finite times, sidestepping some highly relevant theoretical and practical questions.

\subsection
{Overview of Finding}
We apply RNE pricing after adding to the standard information structure risk-specific data that enable \emph{pre-horizon risk-resolution}.
The resulting price dynamic (Section \ref{viablepx}) shows that it can be \emph{suboptimal} to position immediately for 
a mispricing or arbitrage (Section \ref{noantpn}). As such, optimal trading can inhibit the anticipation of predictable outcomes in market prices, and the effect is reinforced under Ambiguity, by Bewley’s Inertia Axiom (\cite{Bew}). Through these the 'viability-efficiency dichotomy' can be averted, but at the expense of 'anticipation'. The theoretical and practical implications of the resulting Status Quo Bias, the potential countermeasures, as well as the importance of finding 'the right balance', are discussed at the end (Remark \ref{stqb}-\ref{jhi})
.

 \subsection
 {Basic Background Assumptions}
We work in a frictionless, continuous and complete-market setting. The core stochastic basis is Wiener, based on classical Wiener spaces, essentially \(C_0[0,1]\) as well as \(C_0[0,1)\) (identified with \(C_0[0,\infty)\))
; the latter provides an explicit representation of \emph{finite-time risk-resolution}.

Asset-pricing under the binary \(B\)-risk is rational with respect to the \emph{reference beliefs}: two \(B\)-sure laws of data and one law of the risk \(B\). The resulting \emph{inferential stochastic basis} is the natural basis for asset-pricing. Let the reference \(B\)-sure laws be accurate by default, or else equivalent to the objective counterparts. The \(B\)-risk law can be unknown (ambiguous) and so the \(B\)-inferential beliefs, evolving with data, can be entirely {subjective}.
 
\subsection
{Technical Issues in Learning New Facts in Finite Time} 
In standard asset-pricing, data comprise only the direct inputs to 'worth', \emph{firmed-up benefits}\footnote{Classically dividend/interest, 'benefits' can be anything deemed 'valuable', widening the notion of 'worth'.}. Such a setup is \emph{irreducible}: future benefits are inferred from past ones. The model- or event-risks therein cannot resolve in finite time, so the resulting markets cannot eliminate \emph{a priori} ignorance or fallacies. Such learning is backward-looking and never-ending.

The above is diametric to the real world, where many types of data are deemed relevant and predictive, 
and risks resolve pre-horizon, even at known dates. This 'truth will out' backdrop matters, given the viability 
of essentially arbitrary prices otherwise (Remark \ref{2weak2strong} later). 
    \subsubsection*
    {Learning seems to make the pricing of resolvable risks intractable} Risks resolvable pre-horizon are priced by arbitrage, not the economics of utility/preference. Yet, it is hard to specify the price-dynamic enough for arbitrage pricing without the latter. Moreover, learning makes RNE probabilities path-dependent and RNE measures intractable, even in simple cases (e.g. \cite{GTim}). We show that these issues diminish in continuous time or given compound data. Explicit price dynamics emerge, and under usual conditions, have an intuitive and familiar price-of-risk; the RNE measure can be read off, at least up to easily absorbed drifts.


\subsubsection*
{FTAP-viability can be impossible for processes that succeed in learning}
Under model- or event-risks, the inferential basis differ from the \emph{physical (realised) basis}. This is obvious for risks pertaining to facts, hypotheses, parameter values or \emph{rare-events}, whose 'randomness' is (largely) subjective so that only (mostly) one 'possibility' becomes physical. Moreover, prices that are FTAP-viable in the inferential basis cannot be so in the true-outcome basis, due to divergence into any risk-resolution. Questions of 'actual viability' 
arise, as does the joint hypothesis problem ("is an apparent anomaly due to market bias or flawed theories of market-price-of-risk?"; see \cite{jhp}). Such issues are moot 
in the absence of pre-horizon risk-resolution; in our approach they demand attention. 

We show that the more robust and general NUPBR-viability 
naturally arises in our setting. Moreover, under heterogeneous beliefs about model- or event-risk, a market with nACMM is found to emerge; 
it persists despite the arbitrage on offer to any inferential traders, 
thanks to such arbitrage being suboptimal to act upon until risk-resolution is imminent. 



\section{Setup}\label{setup}

Let all values be in logarithm, and time- and risk-free discount exogenous and set to nil
. Let horizon/maturity \(T\) be finite but far off: \(1\ll T<\infty\).

\subsection{A Classical Starting Point: the Simplest Standard Model}\label{assbasic}

Consider some FTAP-viable price process \(\{S^{b}_t\}\) of asset-'\(b\)' that offers a \emph{cumulative} firmed-up benefit stream \(\{Z_{t}(b)\}\), \(Z_{0}(b)=0\). 
Let both be driven by a standard Wiener noise \(\{dw_t\}\)
from the standard Wiener basis \((\Omega:=C_0[0,\infty),\{\mathcal{F}_t\};\text{ }w)\). For clarity on the main idea, let benefit drift \(r_b\ge0\) and volatility \(\sigma>0\), along with \emph{risk-premium drift} \(\check{r}_b\ge0\), be constant.

The basic processes thus read as follows: given \(Z_t({b})=z_t\), \(t\le T\),\begin{align}\label{z0}
&dZ_t(b)={r}_bdt+\sigma dw_t
,\\
\label{S0}
&S^b_t=Y^b_t-RP^b(t),\text{ with }Y^{b}_t
=
z_t+r_b(T-t),\\
&\label{RP0}RP^{b}({t}):=Y^b_t-S^b_t=\check{r}_{b}(T-t),
\end{align}where \(Y^{b}_{t}\) is the {benefit expectation} given law (\ref{z0}), and \(RP^{b}(t)\), a {risk-premium}. 
It is convenient to extend the above beyond \(T\): on \([0,\infty)\ni t\) define \emph{base measure} \(W\) via \(dW|_t:=\sigma dw_t\) and the law \(W_{b}\) of benefit via \(dW_{b}|_t:={r}_{b}dt+dW|_t\). Both \(\{Y^b_t\}\) and \(\{S^b_t\}\) are {stopped} processes: \(Y^{b}_{t\ge T}\equiv S^{b}_{t\ge T}\equiv Z_T(b)\), so \(RP^b(\cdot)|_{t\ge T}\equiv0\); their martingale measures are respectively \(W_{b}|_T\) and \(\check{W}^T_{b}\sim W_{b}|_T\), with \(d\check{W}^T_{b}|_t=-\check{r}_{b}dt+dW_b|_t\), \(t\in[0,T]\). Note 
\(dS^b_t=\check{r}_bdt+dY^b_t\).

 Classical asset-pricing (under independent increments) rests on two differentiable functions of time: expected growth \(y^b_T({t}):=
 {E}_b[Z_T(b)-z_{t}]\) and its RNE version \(\check{y}^b_T({t}):=
 \check{{E}}_{b}[Z_T(b)-z_{t}]\), with \(y^b_T({t})-\check{y}^b_T{(t)}=:RP^{b}({t})\). Here they read simply \(
 r_{b}(T-t)\) and \(
 (r_{b}-\check{r}_{b})(T-t)\) respectively, 
with a set-indicator \(\mathbb{I}_{\{\le T\}}(t)\) if defined on \([0,\infty)\).
\subsection{Adding Model- or Event-Risk the Standard Way}\label{addB}

Most model-risk studies focus on ignorance of the drift, which, unlike volatility, is hard to measure (e.g. \cite{mert}). At its most reduced, the task is one of choosing a 'true drift' from a set. Consider then model- or event-risk \(B\), of outcomes \(\mathcal{B}=\{b,\overline{b}\}\). 
To be concrete, set by default 
\(\mathcal{B}=\{+,-\}\), with \(r_\Delta:=r_+-r_->0\). Pricing under \(B\)-risk is guided by \(B\)-inference, based solely on firmed-up benefits, some \(\{Z_t(B)\}\) defined on \(({\Omega},\{\mathcal{F}_t\};\text{ }w)\) as before:\begin{equation}\label{z}
dZ_t(B)={r}_{B}dt+dW|_t=:dW_{B}|_t,
\end{equation}whose \(\{B=b\}\)-version is (\ref{z0}); any 
given \emph{\(B\)-sure FTAP-viable asset-price} \(\{S^B_t\}\) satisfies (\ref{S0}-\ref{RP0}), with RNE measure \(\check{W}^T_B\sim{W}_B|_T\). Wiener-drift tests (\cite{PShir}) take place in the \emph{inferential stochastic basis} \(\big(\mathcal{B}\times{\Omega}
,\{\mathcal{F}_t\}
;\text{ }{\pi}^{B(\cdot)}_0{W}
_{B(\cdot)}(\cdot)\big)\), where \({\pi}^{B(\omega_B)}_0\) is an unconditional law of \(B(\omega_B)\in\mathcal{B}\), \(\omega_B:=(B,\omega)\in\mathcal{B}\times{\Omega}\). \emph{Regular tests} have \({W}_{+}\perp{W}_{-}\) and \({W}_{+}|_{t}\sim{W}_{-}|_{t}\), \(\forall t<\infty\), and resolve \(B\) continuously at \(T_{\mathcal{B}}:=\inf\big\{t|\mathcal{F}_t\ni\{B=\cdot\}\big\}=\infty\). See Appendix \ref{infps} for more. 

\begin{rem}\label{boundeddrift}
    Standard setting implies \({W}_{+}|_T\sim{W}_{-}|_T\) ((\ref{z})) and so no risk-resolution, hence the term '{hidden regimes}' for such risks. Inferred in arrears, never resolved, they are 'real' model-risks, not just queries about a classical process (e.g. "will (\ref{z0}) exceed \(z\)?")
    .
\end{rem}

\subsection{Risk-Specific Data and Scheduled Pre-Horizon Disclosure
}\label{Bdata}

We now add two new features to the standard structure. First, let there be 
data \(\{{D}_t(B)\}\) 
that is \emph{\(B\)-specific}: it is \emph{independent} of benefit \(\{Z_t(B)\}\) 
and is thus {irrelevant} to \(B\)-sure prices. If \(\{D_t(\pm)\}\) is defined on an usual Wiener basis \((\Omega
, \{\mathcal{F}^D_t\};\text{ }W^D_\pm)\), \({W}^D_{+}\perp{W}^D_{-}\), \({W}^D_{+}|_t\sim{W}^D_{-}|_t\), \(\forall t<\infty\), drift-testing may proceed as usual, based on compound data \(\{\mathbf{D}_t\}:=\{(D_t,Z_t)\}\) and filtration \(\{\mathbf{F}_t\}:=\{\mathcal{F}^D_t\otimes\mathcal{F}_t\}\),
 with 'excess' variability \emph{vs} testing based on \(\{Z_t\}\) alone.
 
 Second, let there be \(\{\mathcal{F}^D_t\}\)-predictable \(B\)-disclosure at some  \(t_D\ll\)\footnote{Under \(t_D\ll T\) the value impact of \(B\)-outcomes has negligible erosion over \([0,t_D]\).} \(T\); make it a \emph{known time} and set \(t_D=1\), with no loss of generality \emph{vs} any \(\{\mathcal{F}^D_t\}\)-stopping-time \(t_D\ll T\). The stochastic basis for this information feature can be found 
 by identifying \(C_0[0,\infty)\) and \(C_0[0,1)\) as normed metric spaces 
 (item-\ref{contim''}\&\ref{contim'}, Appendix \ref{infps}). It may be written, with a slight abuse of notation, as \(\big(C_0[0,1),\{\mathcal{F}^D_t\};\text{ }W^D_\pm\big)\), \(W^D_+\perp W^D_{-}\), \({W}^D_{+}|_{t}\sim{W}^D_{-}|_{t}\), \(\forall t<1\). Let data \(\{D_t\}\) on any \emph{pre-disclosure closed-intervals} \([0, 1-\zeta],\text{ with  }\zeta\ll1\text{ arbitrary}\), have the usual form in terms of a standard Wiener noise 
\(dw^D_t\) on \(C_0[0, 1-\zeta]\), with continuous \(\mathcal{F}^D_t\)-adapted drift \(r^D_{B,t}\) and volatility \(\sigma^D_t\):
\begin{equation}\label{bD}
    dD_t(B)=r^D_{B,t}dt+\sigma^D_tdw^D_t=:dW^D_B|_t,\text{ } t\in [0, 1-\zeta],
\end{equation}where \(r^D_{B,t}\) and \(\sigma^D_t\) diverge\footnote{Such dynamics may emerge also when the \(B\)-risk involved is an 'indicator pattern' (of various economic indices) to be declared 'on' or 'off' at \(t=1\), a typical scenario of 'side/inside information' studies.
} as \(t\to1\); 
 set \(\{r^{D}_{\Delta,t}\}:=\{r^D_{+,t}\}-\{r^D_{-,t}\}>0\) for concreteness.

Our information structure adds modelling versatility
. For instance, it caters for a common real-world scenario: a market  'triggered' into a bout of model- or event-risk, whose economics dominate (routine ones) and whose truth can be learnt at a known date. 


\begin{rem}\label{anythingatall}
    Note also that \(B\)-risk and \(B\)-specific data may represent any, not necessarily sound, system of beliefs and data (e.g. astrological). Our compound structure with pre-horizon risk-resolution is a tractable way to capture irrational but self-consistent market aspects.
\end{rem}



\section{The RNE Formulation of Model- or Event-Risk Pricing}\label{viablepx}

By definition of all the terms involved, we have:\begin{equation}\label{dY(B)}
    dY^b_t(B)=dZ_t(B)+dy^b_T(t)=sign(\overline{b})\mathbb{I}_{\{\overline{b}\}}(B)\cdot r_\Delta dt+dW|_t;
\end{equation}hence \(dY^B_t(B)\equiv dW|_t\). As such, 
 only drift-difference counts in inferential testing 
 .
 De-trending raw data makes \(\{\mathbf{D}_t(-)\}\) a martingale, gives \(\{\mathbf{D}_t(+)\}\) a drift of the form \(\{(r^{D}_{\Delta,t},r_\Delta)\}\), and leads to: \(dW_-|_t=\sigma dw_t\), 
\(dW^D_-|_t=\sigma^D_tdw^D_t\)
, \(y^-_T(t)=0\), \(Y^-_t=
z_t\). 

Further, as risk-premium compensates for volatility, which is unaffected by our \(B\)-risk
, we shall presume \emph{identical \(B\)-sure risk-premium}\footnote{This 'purist' view, just as the constant drift and volatility of asset-benefit, is non-essential, but the resulting lighter notations promote clarity on the core ideas to come.}: \(RP^+(t)=RP^-(t)=:\check{RP}(t)\), and therefore, by \({Y}^{\Delta}(t)-{S}^{\Delta}(t)\equiv RP^+(t)-RP^-(t)\), where \({Y}^{\Delta}(t):={{y}}^+_T(t)-{{y}}^-_T(t)>0\) and \({S}^{\Delta}(t):=\check{{y}}^+_T(t)-\check{{y}}^-_T(t)>0\), \emph{identical risk-impact} \({X}_{\Delta}(t)
>0\) on benefit expectations and asset prices.

\subsection{FTAP-Viable Pricing in the Inferential Basis}\label{coreprop}
Asset-pricing under \(B\)-risk is driven by the risk-beliefs \(\{\pi^B_t\}\) generated on the inferential basis \(\big(\mathcal{B}\times{\mathbf{\Omega}}_{1},\{\mathbf{F}_t\};\text{ }\pi^B_0\mathbf{W}_{B}\big)\), \(\mathbf{{\Omega}}_{1}:=C_0[0,1)\times C_0[0,1] \), \(
{\mathbf{W}}_{B}:=
W^D_B\times W_B|_{1}\). FTAP-viable asset-prices \(\{S_t\}:=\{Y_t\}-\{RP_t\}\), expected benefit \(\{Y_t\}\) less risk-premium \(\{RP_t\}\), must have a martingale measure equivalent to that of \(\{Y_t\}\), a well-known \(\{\mathbf{F}_t\}\)-martingale under \(\pi^B_0\mathbf{W}_{B}\).

\begin{rem}\label{2weak2strong}
    FTAP-viability is too broad for most modelling purposes: 
 without \(B\)-resolution, both \(B\)-sure 
 prices, \(z_{t}+\check{y}^{B}_T(t)\), are FTAP-viable on \([0,T]\) regardless of 'the truth about \(B\)'. Yet it is too narrow: it is impossible if one of the \(B\)-outcomes is unphysical and so the \(B\)-risk, entirely subjective (e.g. 'Was there a takeover offer?'); see (\ref{resovlingls}). The first issue is addressed by pre-horizon \(B\)-disclosure, and the second, by NUPBR-viability (Section \ref{emo}).
\end{rem}

More specifically, the key to asset-pricing under \(B\)-risk is the risk-premium attributable to \(B\)-risk, to be denoted \(B\_RP_t\)
. Economic interpretability also requires \({S}_t\in
[{S}^-_t,{S}^+_t]\) and so the existence of \emph{pricing coefficients} \(0\le\{{A}^B_t\}\le1\) such that \({S}_t\equiv\sum_B{{S}^B_t}{A}^B_t=:\langle S\rangle^{A}_t\).

To summarise: with \(\sigma^{\pi}_t:={(\pi^+_t{\pi}^-_t)^{\frac{1}{2}}}\) and in shorthand \(\langle \cdot\rangle^{\pi}_t:=\sum_B{{(\cdot)}^B}{\pi}^B_t\),\begin{align}
    \label{yt}  
&{Y}_t\equiv\langle Y\rangle^{\pi}_t=Y^-_t+X_\Delta(t)\cdot\pi^+_t,
\\
\label{abn}&{S}_t\equiv\langle{S}\rangle^A_t={S}^-_t+{X}_{\Delta}(t)\cdot A^+_t
,\\
\label{brpn}&{B\_RP}_t:=
\langle S\rangle^{\pi}_t-{S}_t={RP}_t-\check{RP}(t)={X}_{\Delta}(t)({\pi}^+_t-A^+_t),\\
\label{k}&k^A_t:=\frac{{B\_RP}_t}{{X}_{\Delta}(t)\sigma^{\pi}_t}=\frac{{\pi}^+_t-A^+_t}
{\sigma^{\pi}_t};
\end{align}note the \emph{price-of-\(B\)-risk} \(k^A_t\), the standard deviation of \(B\_RP_t\) being \({X}_{\Delta}(t)\sigma^{\pi}_t\)
. Process \(\{k^A_t\}\) is \(\{\mathbf{F}_t\}\)-adapted, continuous, and non-negative 
(at least \emph{ex ante})
.

It is convenient to work with \emph{\(B\)-risk only pricing} \(\langle{Y}\rangle^A_t\), ignoring \(B\)-sure risk-premium \(\check{RP}(t)\),\begin{align}\label{S0A}&
\langle{Y}\rangle^A_t:=Y_t-B\_RP_t=
S_t+\check{RP}(t)
= Y^-_t+{X}_{\Delta}(t)\cdot A^+_t
;
\end{align}given tame \(\check{RP}(t)\) (as in most cases), the \(B\)-risk only price \(\langle Y\rangle^{A}_t\) is FTAP-viable \emph{iff.} the asset-price \(S_t
=\langle Y\rangle^{A}_t-\check{RP}(t)\) is. A comparison of \(\langle Y\rangle^{A}_t\) 
with benefit expectation \(\langle Y\rangle^{\pi}_t\) ((\ref{yt})) suggests and proves (Appendix \ref{proofp3.1}) that 
\(\langle Y\rangle^{A}_t\) is FTAP-viable 
    \emph{iff.} pricing coefficients \(\{A^+_{t}\}\) amount to RNE \(B\)-risk beliefs: \(\{A^+_{t}\}=\{\hat{\pi}^+_t\}\sim\{\pi^+_t\}\). The claim is obvious should \(B\)-inference rely on \(\{D_t(B)\}\) only, 
    by the resulting independence of \(\{\pi^+_t(B)\}\) and \(\{Y^-_t(B)\}\). 
    \begin{proposition}\label{p3}
Under \(B\)-risk, with data \(\{\mathbf{D}_t\}\) governed by reference law \(\pi^B_0\mathbf{W}_B\) over \([0,1]\), giving rise to \(B\)-risk beliefs \(\{\pi^B_t\}\), 
any FTAP-viable asset-price \(\{S_t\}\) must have the form:\begin{align}\label{twintwin}
     &S_t=\langle{Y}\rangle_t^{\hat{\pi}}-\check{RP}(t),
 \end{align}where \(B\)-risk only asset-pricing \(\langle{Y}\rangle_t^{\hat{\pi}}\) 
 has RNE law of the form \(\hat{\pi}^B_0(\hat{W}^D_B\times{W}_B|_{1})\), with \(\hat{W}^D_B\sim{W}^D_B\), \(\hat{\pi}^B_0\sim{\pi}^B_0\) and RNE \(B\)-risk beliefs \(\{\hat{\pi}^B_t\}\sim\{{\pi}^B_t\}\).
 \end{proposition}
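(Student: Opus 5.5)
The plan is to reduce FTAP-viability of \(S_t\) to that of the \(B\)-risk only price \(\langle Y\rangle^A_t=Y^-_t+X_\Delta(t)A^+_t\), which is permitted by (\ref{S0A}) given tame \(\check{RP}(t)\). The question then becomes: for which equivalent measures \(Q\sim\pi^B_0\mathbf{W}_B\) on \(\big(\mathcal{B}\times\mathbf{\Omega}_1,\{\mathbf{F}_t\}\big)\) is \(\langle Y\rangle^A\) a \(Q\)-martingale? I would parametrise \(Q\) through its density process. Since \(\mathbf{F}_t\) is generated by two independent Wiener components plus the initial \(B\)-randomisation, martingale representation gives
\[
\frac{dQ}{d(\pi^B_0\mathbf{W}_B)}\Big|_{\mathbf{F}_t}=\mathcal{E}\Big(\int\theta^D\,dw^D+\int\theta\,dw\Big)_t\cdot\frac{\hat\pi^B_0}{\pi^B_0},
\]
with a \(B\)-dependent initial reweighting. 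I would then disintegrate \(Q\) over \(B\) as \(\hat\pi^B_0\hat{\mathbf{W}}_B\), where \(\hat\pi^B_0\sim\pi^B_0\) and \(\hat{\mathbf W}_B\sim\mathbf W_B\). Under \(Q\), the filter \(\hat\pi^+_t:=Q(B=+\mid\mathbf F_t)\) is automatically a \(Q\)-martingale equivalent to \(\pi^+_t\), because the conditional laws are equivalent.

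First, sufficiency. Write \(Q=\hat\pi^B_0(\hat W^D_B\times W_B|_1)\), so the \(Z\)-component is left untouched. Under \(Q\), \(Y^-_t=z_t\) obeys \(dz_t=\mathbb I_{\{+\}}(B)r_\Delta dt+dW|_t\) (after de-trending). Its \(\mathbf F_t\)-compensator is therefore \(\hat\pi^+_t r_\Delta dt\). Because \(X_\Delta(t)=r_\Delta(T-t)\) on \([0,1]\), we get \(dX_\Delta=-r_\Delta dt\), and Itô gives
\[
d\langle Y\rangle^{\hat\pi}_t=dz_t-\hat\pi^+_t r_\Delta dt+X_\Delta(t)\,d\hat\pi^+_t .
\]
The first two terms together form the \(Q\)-innovation martingale, and the last is a \(Q\)-martingale. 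Hence \(\langle Y\rangle^{\hat\pi}\) is a \(Q\)-martingale. This is the same computation that makes \(\langle Y\rangle^\pi\) a martingale under \(\pi^B_0\mathbf W_B\), now carried out under \(Q\).

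Second, necessity, which I expect to be the main obstacle. Suppose \(\langle Y\rangle^A\) is a martingale under some equivalent \(Q\). The \(Q\)-compensator of \(z\) is \((\hat\pi^+_t r_\Delta+\sigma\theta_t)dt\), where \(\theta\) is the Girsanov kernel on the \(w\)-component and \(\hat\pi^+_t:=Q(B=+\mid\mathbf F_t)\). The martingale condition then reads
\[
(\hat\pi^+_t r_\Delta+\sigma\theta_t)dt-A^+_tr_\Delta dt+X_\Delta\,dA^+_t=dM_t .
\]
Write \(\tilde\pi_t:=\hat\pi^+_t+\sigma\theta_t/r_\Delta\); this reflects the freedom to shift drift on \(w\) into an effective \(B\)-weight. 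The condition becomes \(X_\Delta\,dA^+_t-(A^+_t-\tilde\pi_t)r_\Delta dt\) being a martingale, i.e. \(d\big(X_\Delta A^+_t\big)+\tilde\pi_t r_\Delta dt\) is a martingale, with \(d\big(X_\Delta A^+_t\big)=X_\Delta\,dA^+_t-r_\Delta A^+_t\,dt\). I would then absorb the \(w\)-drift \(\theta\) into a redefined, \(B\)-dependent reweighting: replace \(Q\) by \(Q'\) with \(\theta'\equiv0\) on the \(w\)-component and modified \(\hat W^D_B\) and \(\hat\pi_0\). The aim is to obtain \(Q'=\hat\pi^B_0(\hat W^D_B\times W_B|_1)\) under which \(A^+_t=Q'(B=+\mid\mathbf F_t)\) and \(\langle Y\rangle^A\) is still a martingale. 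The justification is that, on \([0,1]\), \(B\) enters \(z\) only through the drift \(r_\Delta\mathbb I_{\{+\}}(B)\), so any admissible drift on \(w\) is equivalent, as far as prices go, to a reweighting of \(B\). Uniqueness of the martingale \(A^+\) with terminal value \(A^+_1=\mathbb I_{\{+\}}(B)\), forced by disclosure at \(t=1\) and \(S_1\in\{S^-_1,S^+_1\}\), then identifies \(A^+_t\) as a \(Q'\)-conditional probability of \(\{B=+\}\).

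Third, I would verify equivalence \(\{\hat\pi^+_t\}\sim\{\pi^+_t\}\), meaning that both lie in \((0,1)\) on \([0,1)\) and hit \(\{0,1\}\) together at disclosure. This follows from \(\hat W^D_\pm\sim W^D_\pm\) and the mutual singularity \(W^D_+\perp W^D_-\) being preserved under equivalence. The delicate point throughout is the blow-up of \(r^D_{B,t}\) and \(\sigma^D_t\) as \(t\to1\). I would run every step on \([0,1-\zeta]\) and pass \(\zeta\to0\), using the \(C_0[0,1)\) identification of Appendix \ref{infps}. Martingale convergence of the bounded processes \(A^+\) and \(\hat\pi^+\) gives the limit at \(t=1\).
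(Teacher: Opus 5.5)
Your sufficiency computation is correct. The necessity step, however, has a genuine gap: a drift on $w$ cannot in general be traded for a reweighting of $B$ plus a modified $\hat W^D_B$.

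Suppose $\langle Y\rangle^A$ is a martingale under some $Q'=\hat\pi^B_0(\hat W^D_B\times W_B|_{1})$. Then $z_t+X_\Delta(t)A^+_t=E_{Q'}[z_1+X_\Delta(1)\mathbb{I}_{\{+\}}(B)\mid\mathbf{F}_t]$ forces $A^+_t=Q'(B=+\mid\mathbf{F}_t)$. Because the $Z$-likelihood ratio is untouched under $Q'$, Bayes' rule then makes $\log(A^+_t/A^-_t)-l^{Z\frac{+}{-}}_t$ an $\mathcal{F}^D_t$-measurable quantity.

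Now take $Q=\pi^B_0(W^D_B\times W^{\epsilon}_B|_{1})\sim\pi^B_0\mathbf{W}_B$, where $W^{\epsilon}_{\pm}$ shifts the drift of $z$ by $\mp\sigma\epsilon$. Define $A^+$ by $z_t+X_\Delta(t)A^+_t:=E_Q[z_1+X_\Delta(1)\mathbb{I}_{\{+\}}(B)\mid\mathbf{F}_t]$. With $q_t:=Q(B=+\mid\mathbf{F}_t)$ and $\delta_t:=\sigma\epsilon(1-t)/X_\Delta(t)$, you get $A^+_t=(1-2\delta_t)q_t+\delta_t\in[\delta_t,1-\delta_t]$ and $A^+_1=\mathbb{I}_{\{+\}}(B)$. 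Moreover $\langle Y\rangle^A$ is a $Q$-martingale, so this price is FTAP-viable and economically admissible.

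Yet at fixed $t<1$ the log-odds of this $A^+$ are bounded. By contrast, $l^{Z\frac{+}{-}}_t$ is affine in $z_t$, and $z_t$ has unbounded support given $\mathcal{F}^D_t$. So no such $Q'$ exists.

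Your appeal to uniqueness of a martingale with terminal value $\mathbb{I}_{\{+\}}(B)$ does not close the gap. Under $Q$ your $A^+$ carries the drift $(A^+_t-\tilde\pi_t)r_\Delta/X_\Delta(t)$. Uniqueness only helps once you have a $Q'$ making $A^+$ a martingale, which is the very point at issue.

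So the product form with $W_B|_{1}$ unmodified is not a consequence of FTAP-viability alone. The paper does not derive it from a general ELMM either; it effectively takes it as part of the setting. Its proof asserts two things:
\begin{itemize}
\item viability means $d\langle Y\rangle^A_t$ and $d\langle Y\rangle^\pi_t$ differ only by drifts;
\item under $B$-risk only pricing, all pricing of $w$-risk is carried by $\check{RP}(t)$.
\end{itemize}
Hence the RNE measure keeps $B$-conditional independence and leaves $W_B|_{1}$ alone. The $Q$ above is excluded precisely by these assertions.

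The remaining freedom is then on the $D$-side only. Splitting $\hat l^{\frac{+}{-}}_t=\hat l^{D\frac{+}{-}}_t+l^{Z\frac{+}{-}}_t$, the paper invokes (\ref{neat}) to obtain $\hat W^D_B$ from a $B$-independent drift $\hat\theta^D$ on each $[0,1-\zeta]$. It then reaches $t\to1$ through the finite relative-drift condition (\ref{finitereldrft})--(\ref{novs}).

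To repair your argument, take that restriction as an explicit hypothesis: no kernel on $w$, and $B$-conditional independence under the RNE measure. Then disintegrating $Q$ over $B$ and applying your own compensator identity give $A^+_t=\hat\pi^+_t$ immediately.

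One further point on the limit $t\to1$. Bounded-martingale convergence as $\zeta\to0$ gives limits of $A^+$ and $\hat\pi^+$. It does not give $\hat W^D_B\sim W^D_B$ on all of $[0,1)$ for a measure assembled interval by interval. That is exactly what (\ref{finitereldrft})--(\ref{novs}) supplies.
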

\subsubsection*{Economic Considerations and Canonical Risk-Pricing}
So far, beside the very obvious, we have not imposed any economic conditions. Most asset-pricing theories maximise expected utility, thereby making the price-of-\(B\)-risk \(k^A_t\) not only data-adapted but also a ({2-differentiable}) function \(\mathbf{k}^A\) of \(B\)-risk beliefs:\begin{equation}\label{kcon}
k^A_t\equiv \mathbf{k}^A(\pi^+_t).
\end{equation}Under FTAP-viability, this is a strong demand, thanks to Ito's Formula (Appendix \ref{coro1proof}):

\begin{cor} \label{coro0}
    FTAP-viable pricing (\ref{twintwin}) subject to (\ref{kcon}) must have \(B\)-risk only pricing \(\langle{Y}\rangle_t^{{\Pi}}\) based on RNE \(B\)-risk beliefs \({\Pi}^{B}_t\) under canonical RNE laws of the form \(\Pi^B_0\mathbf{W}_B\), \(\Pi^B_0\sim\pi^B_0\)
    . 
\end{cor}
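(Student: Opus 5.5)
Starting from Proposition \ref{p3}, any FTAP-viable price has $A^+_t=\hat{\pi}^+_t$, where the RNE law is $\hat{\pi}^B_0(\hat{W}^D_B\times W_B|_1)$. The plan is to show that imposing (\ref{kcon}) forces $\hat{W}^D_B$ to coincide with $W^D_B$. Once that holds, the only freedom left is the prior $\Pi^B_0:=\hat{\pi}^B_0\sim\pi^B_0$.

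The first step is to write the filter dynamics. Under the reference law, in innovation form and with de-trended data, the belief process satisfies
\begin{equation*}
d\pi^+_t=\pi^+_t\pi^-_t\Big(\tfrac{r_\Delta}{\sigma}\,d\nu_t+\tfrac{r^D_{\Delta,t}}{\sigma^D_t}\,d\nu^D_t\Big),
\end{equation*}
where $\nu,\nu^D$ are the innovation Brownian motions. The corresponding RNE beliefs $\hat{\pi}^+_t$ are a $\hat{\pi}^B_0(\hat{W}^D_B\times W_B)$-martingale with the same form. Since the benefit component of the law is $W_B$ in both cases, $\hat{\pi}^+_t$ is the Bayes filter with the modified $D$-drifts $\hat r^D_{\pm,t}$, the Girsanov drifts of $\hat W^D_\pm$.

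The second step brings in (\ref{kcon}). We have $\hat{\pi}^+_t=\pi^+_t-\sigma^\pi_t\,\mathbf{k}^A(\pi^+_t)=:f(\pi^+_t)$, with $f$ twice differentiable. Apply Itô's formula to $f(\pi^+_t)$ under the reference measure, and compare the result with the semimartingale decomposition of $\hat{\pi}^+_t$ obtained from its own filter equation. Matching the martingale parts against each of the two independent noises separately gives two equations:
\begin{align*}
f'(\pi)\,\pi(1-\pi)\,\frac{r_\Delta}{\sigma}&=\hat\pi(1-\hat\pi)\,\frac{r_\Delta}{\sigma},\\
f'(\pi)\,\pi(1-\pi)\,\frac{r^D_{\Delta,t}}{\sigma^D_t}&=\hat\pi(1-\hat\pi)\,\frac{\hat r^D_{\Delta,t}}{\sigma^D_t}.
\end{align*}
The benefit channel is crucial here, because its drift difference $r_\Delta$ is fixed under both laws. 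The first equation therefore pins down $f'(\pi)\pi(1-\pi)=f(\pi)(1-f(\pi))$. Solving this ODE gives $f(\pi)/(1-f(\pi))=c\,\pi/(1-\pi)$, so $f$ is a constant odds rescaling, which is exactly a Bayes update under a different prior. Substituting into the second equation then forces $\hat r^D_{\Delta,t}=r^D_{\Delta,t}$, so the $D$-signal-to-noise is unchanged.

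The third step removes the common part of the $D$-drift, i.e. shows that $\hat W^D_B=W^D_B$. The route is the drift matching in Itô's formula. The drift of $f(\pi^+_t)$ is a function of $\pi^+_t$ alone, with no separate dependence on the level of the $D$-drift. Meanwhile, $\hat{\pi}^+$ must be an RNE martingale, and once the benefit part of the law is fixed, $Y^-$ must be an RNE martingale too. Together these force any common shift of the $D$-drift under $\hat W^D_\pm$ to be absorbed, up to a measure change that leaves the $\mathbf F$-law of the relevant prices unaffected. This is the "easily absorbed drifts" caveat from the introduction. Since $D$ does not enter prices, identify $\hat W^D_B$ with $W^D_B$, and set $\Pi^B_t$ to be the posterior from prior $\Pi^B_0$ with $\Pi^+_0/\Pi^-_0=c\,\pi^+_0/\pi^-_0$.

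I expect the main obstacle to be the rigorous matching of martingale parts. This needs the two noises $w$ and $w^D$ to be independent and non-degenerate, and it needs $\pi^+_t$ to range over an open interval, so that the functional identities hold for all $\pi$. I also need $f\in(0,1)$ for the ODE solution to be valid; this follows from $\hat\pi\sim\pi$.
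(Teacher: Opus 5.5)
\textbf{Overall.} Your Step 2 is the paper's argument in different coordinates, but Step 3 has a genuine gap. It can be closed with the output of Step 2.

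\textbf{Step 2.} The paper writes $\hat l^{\frac{+}{-}}_t=g[l^{\frac{+}{-}}_t]$. It matches the It\^o diffusion term $g'\sigma^l_t\,dw_t$ against $\sigma^l_t\,d\hat w_t$, using $\hat\sigma^l=\sigma^l$ from Proposition \ref{p3}, and gets $g'=1$. Your ODE $f'(\pi)\pi(1-\pi)=f(\pi)(1-f(\pi))$ is exactly $g'=1$ after the logit change of variable. Getting it from the covariation with the benefit noise, whose drift gap $r_\Delta$ the RNE law leaves untouched, neatly avoids needing the log-LR equivalence as an input. It does, however, need $r_\Delta>0$ before disclosure. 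Under (\ref{assmption3}), where the paper also applies this corollary (Remark \ref{ubi}), the benefit channel carries no signal, and you would have to fall back on $\hat\sigma^{lD}=\sigma^{lD}$.

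\textbf{The gap in Step 3.} Both claims it rests on are false.
\begin{itemize}
\item A $B$-independent shift $d\hat W^D_\pm|_t=-\frac{1}{2}\hat\theta^D_t\,dt+dW^D_\pm|_t$ is not price-neutral. It leaves the drift gap alone but changes the likelihood-ratio compensator, since $(r^D_{+,t}-\frac{1}{2}\hat\theta^D_t)^2-(r^D_{-,t}-\frac{1}{2}\hat\theta^D_t)^2=(r^D_{+,t})^2-(r^D_{-,t})^2-\hat\theta^D_t r^D_{\Delta,t}$. The RNE log-odds therefore acquire the extra term $\frac{1}{2}\int_0^t\hat\theta^D_u r^D_{\Delta,u}(\sigma^D_u)^{-2}du$; compare (\ref{lhat=l+drft}) and (\ref{corolaw}). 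So $D$ does enter the price, through $\hat\pi^+_t$.
\item $Y^-_t$ is not an RNE martingale. Under $\hat\pi^B_0(\hat W^D_B\times W_B|_1)$ its $\mathbf{F}$-drift is $r_\Delta\hat\pi^+_t$; only $Y^-_t+X_\Delta(t)\hat\pi^+_t$ is a martingale.
\end{itemize}
So ``absorbing'' the common drift and then identifying $\hat W^D_B$ with $W^D_B$ is asserted, not proved. Without it you have not reached the canonical form $\Pi^B_0\mathbf{W}_B$.

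\textbf{Repair.} Step 2 gives $\log(\hat\pi^+_t/\hat\pi^-_t)=\log(\pi^+_t/\pi^-_t)+\log c$ for all $t$. Here $c$ is a genuine constant because $\mathbf{k}^A$ in (\ref{kcon}) is time-homogeneous.
\begin{itemize}
\item Write each side as prior log-odds plus log-LR. The compensator computation above then forces $\int_0^t\hat\theta^D_u r^D_{\Delta,u}(\sigma^D_u)^{-2}du=0$ for every $t<1$.
\item Since $r^D_{\Delta}>0$, this gives $\hat\theta^D\equiv 0$, hence $\hat W^D_B=W^D_B$.
\item Set $\Pi^B_0:=\hat\pi^B_0$, with odds $c\,\pi^+_0/\pi^-_0$.
\end{itemize}
This is also how the paper closes: $g'=1$ yields $\hat l=l+\text{const}$, hence $\hat\theta^D=0$, which it then uses in the proof of Corollary \ref{coro1}. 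Time-homogeneity is doing real work here: a time-dependent $f$ would admit a deterministic nonzero $\hat\theta^D$.
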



\begin{cor} \label{coro1}Canonical 
price-of-\(B\)-risk \(k^{\Pi}_t:=\frac{{\pi}^+_t-\Pi^+_t}
{\sigma^{\pi}_t}\) 
has the following property:\begin{align}
\label{canon2}
        k^{\Pi}_t=(K^{\frac{1}{2}}-K^{-\frac{1}{2}})\sigma^{\Pi}_t,
        \end{align}where 
        \(K:={L^{\frac{+}{-}}_0}/{L^{\Pi\frac{+}{-}}_0}
        ={L^{\frac{+}{-}}_t}/{L^{\Pi\frac{+}{-}}_t} 
        \) is the ratio of the reference and RNE inferential odds, \(L^{\frac{+}{-}}_{(\cdot)}:=\frac{\pi^+_{(\cdot)}}{\pi^-_{(\cdot)}}\) and \(L^{\Pi\frac{+}{-}}_{(\cdot)}:=\frac{\Pi^+_{(\cdot)}}{\Pi^-_{(\cdot)}}\); odds-ratio \(K\) is a conserved constant of the inference dynamic.
        \end{cor}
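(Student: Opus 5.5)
The plan is to exploit Corollary \ref{coro0}: under (\ref{kcon}) the RNE law is canonical, of the form \(\Pi^B_0\mathbf{W}_B\). The RNE beliefs \(\{\Pi^B_t\}\) are therefore obtained by Bayes' rule from the \emph{same} likelihood process as the reference beliefs \(\{\pi^B_t\}\); only the prior differs. Write \(\Lambda_t:=\frac{d\mathbf{W}_+|_{\mathbf{F}_t}}{d\mathbf{W}_-|_{\mathbf{F}_t}}\), which is well defined for \(t<1\) since \(\mathbf{W}_+|_t\sim\mathbf{W}_-|_t\). Bayes' rule then gives
\begin{equation*}
L^{\frac{+}{-}}_t=L^{\frac{+}{-}}_0\Lambda_t,\qquad L^{\Pi\frac{+}{-}}_t=L^{\Pi\frac{+}{-}}_0\Lambda_t .
\end{equation*}
Dividing, the common factor \(\Lambda_t\) cancels, so \(L^{\frac{+}{-}}_t/L^{\Pi\frac{+}{-}}_t=L^{\frac{+}{-}}_0/L^{\Pi\frac{+}{-}}_0=K\) for all \(t\). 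This proves the conservation claim. Positivity and finiteness of \(K\) follow from \(\Pi^B_0\sim\pi^B_0\), which keeps both priors in \((0,1)\).

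The identity (\ref{canon2}) is then pure algebra in the two odds. Set \(L:=L^{\frac{+}{-}}_t\) and \(M:=L^{\Pi\frac{+}{-}}_t\), so that \(L=KM\), \(\pi^+_t=L/(1+L)\), \(\pi^-_t=1/(1+L)\), and similarly for \(\Pi\) with \(M\). Then \(\sigma^\pi_t=L^{1/2}/(1+L)\) and \(\sigma^\Pi_t=M^{1/2}/(1+M)\). The numerator is
\begin{equation*}
\pi^+_t-\Pi^+_t=\frac{L-M}{(1+L)(1+M)}=\frac{M(K-1)}{(1+L)(1+M)}.
\end{equation*}
Dividing by \(\sigma^\pi_t\) gives
\begin{equation*}
k^\Pi_t=\frac{M(K-1)}{L^{1/2}(1+M)}=\frac{M^{1/2}(K-1)}{K^{1/2}(1+M)}=(K^{\frac12}-K^{-\frac12})\,\sigma^\Pi_t ,
\end{equation*}
using \(L^{1/2}=K^{1/2}M^{1/2}\). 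This is (\ref{canon2}).

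The only step needing care is the first. One must confirm that the canonical form in Corollary \ref{coro0} really means the \emph{same} \(\mathbf{W}_B\) under both measures, with no change in the data laws \(\hat W^D_B\) of Proposition \ref{p3}, so that the likelihood ratio is shared. One must also note that the boundary behaviour as \(t\to1\) does not matter: where \(\Lambda_t\) degenerates, both posteriors hit \(0\) or \(1\) together. The conservation statement is made on \([0,1)\), where all quantities are finite. Alternatively, one can check conservation dynamically: with \(d\pi^+_t=\pi^+_t\pi^-_t\,\theta_t\,d\bar w_t\) for the innovation process, Itô applied to \(\log L-\log M\) gives zero drift and zero diffusion once the innovation under each measure is accounted for. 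I would present the Bayes-rule argument and mention the Itô check as confirmation.
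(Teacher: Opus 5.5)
Your proof is correct and is essentially the paper's argument. The paper integrates the RNE log-LR dynamic (\ref{w''}) to obtain (\ref{corolaw}), whose right-hand side reduces to \(1\) because \(\hat{\theta}^{D}=0\) under canonical pricing; this is the SDE form of your observation that the reference and RNE odds share the likelihood factor \(\Lambda_t\) and differ only through the prior. Your explicit algebra deriving (\ref{canon2}) from \(L^{\frac{+}{-}}_t=K\,L^{\Pi\frac{+}{-}}_t\), which the paper leaves implicit, is also correct.
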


\begin{rem}\label{1-2}
      The price-of-\(B\)-risk \(k^{\Pi}_t\) is non-negative provided \(K\ge1\). At peak \(B\)-risk \(\sigma^{\pi}_t=\frac{1}{2}\), it has value \(k_{{1}/{2}}:=(K-1)/(K+1)\), giving a \(B\)-risk premium of \(\frac{1}{2}k_{{1}/{2}}{X}_{\Delta}(t)\) ((\ref{k})) and a gain-to-loss ratio of \((1+k_{{1}/{2}})/(1-k_{{1}/{2}})\equiv K\) exactly\footnote{The gain-to-loss ratio with respect to \(B\)-outcomes being \(L^{\Pi\frac{-}{+}}_t\), it equals \(K\) at peak \(B\)-risk, where \(L^{\frac{+}{-}}_t=1\).}. 
     It is natural in theory and practice to have \(K\in(1,2)\) for 'competitive risk-pricing', with \(K-1\ge1\) rare in any case.\end{rem}

Economic-pricing conditions, such as \(k^A_t\ge0\), let alone (\ref{kcon}), may be hard to justify, however, as any \(B\)-risk resolvable pre-horizon is subject to arbitrage pricing. Non-canonical prices can be seen as canonical prices under distorted \(B\)-sure beliefs (Appendix \ref{weakp'}).

\subsection{A Simplifying and Practically Relevant Concrete Case}\label{shortcut''}

Consider the risk of change from a given status quo; call outcome 'change' \(\{B=\mathbf{1}\}\), and 'no change', \(\{B=\mathbf{0}\}\). In our setting, their drift-difference then has the form \(r_{\Delta}(t)=\mathbb{I}_{\{>1\}}(t)\cdot r_\Delta\), so that \emph{post-disclosure} benefits have \(B\)-dependence, but pre-disclosure, none:
\begin{align}
    \label{assmption3}
W_{\mathbf{0}}|_{t}=W_{\mathbf{1}}|_{t},\text{ i.e. }Z_{t}(\mathbf{0})=Z_{t}(\mathbf{1}),\text{ \(\forall t\le1\)}.\end{align}Note, as a result, \(dY^{\mathbf{0}}_t(B)=dW|_t
\), \(\forall t\le1\), regardless of \(B\), unlike (\ref{dY(B)}), so that \(B\)-inference can rely on \(B\)-specific data \(\{D_t\}\) only, and \(B\)-risk impact, constant during inference:\begin{align}
\label{fixedimp}{X}_{\Delta}(t)
=
{{y}}^+_T(1)-{{y}}^-_T(1)=:X_{\Delta},\text{ \(\forall t\le1\)},
\end{align}where '\(\pm\)' labels the economics of one \(B\)-outcome \emph{vs} the other; 
let this be conveyed by indicator function \(sign(B)=\pm\) henceforth
. 

Given the above, by (\ref{S0A}-\ref{twintwin}) and (\ref{rawdrift}) (Item-\ref{ckchng}, Appendix \ref{infps}), we have the following. 
\begin{cor}\label{coro0'}
     Under (\ref{assmption3}-\ref{fixedimp}), asset-price (\ref{twintwin}) 
     has RNE law of the form \(\hat{\pi}^B_0(\hat{W}^D_B\times \check{W}^T_B|_{1})\). The price dynamic under \(B\)-risk reads: with respect to \(dw_t=\frac{dW|_t}{\sigma}\) and \(d\hat{w}^D_t:=\frac{d\hat{W}^D_-|_t}{\sigma^D_t}\),
\begin{equation}
       \label{fulldynamics}
dS
_{t}(B)=
\check{r}
dt+\sigma dw^{}_{t}+
X_{\Delta}\cdot(\sigma^{\hat{\pi}}_t)^2\big{[}(\mathbb{I}_{\{+\}}\circ sign -\hat{\pi}^{+}_t)(B)\cdot(\sigma^{lD}_t)^2dt+\sigma^{lD}_td\hat{w}^{D}_{t}\big{]},
\end{equation}driven by: 1) the risk-premium drift 
\(-d\check{RP}(t)
\), 2) baseline noise \(dY^{\mathbf{0}}_t\), and 3) RNE inference \(
X_{\Delta}\cdot d\hat{\pi}^+_t\) based on the signal-to-noise \(\sigma^{lD}_t:={r^{D}_{\Delta,t}}/{{\sigma^D_t}}\) of data \(\{D_t\}\).\end{cor}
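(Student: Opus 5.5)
The plan is to read off the price dynamic by combining three facts: the decomposition \(S_t=\langle Y\rangle^{\hat\pi}_t-\check{RP}(t)\) from Proposition \ref{p3}, the simplification (\ref{assmption3}) that the benefit data carry no \(B\)-information before disclosure, and the standard Wiener-drift filtering equation for the posterior driven by \(\{D_t\}\) alone.

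\textbf{Step 1: identify the RNE law and differentiate the price.} Under (\ref{assmption3}) the benefit coordinate \(W_B|_1\) does not depend on \(B\) on \([0,1]\). Hence the likelihood ratio \(d\mathbf{W}_+/d\mathbf{W}_-\) restricted to \(\mathbf{F}_t\), \(t<1\), reduces to that of \(W^D_\pm\), and \(\pi^+_t\) (and any equivalent \(\hat\pi^+_t\)) is a functional of \(\{D_t\}\) only. It is therefore independent of \(\{Y^-_t\}=\{z_t\}\).

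Proposition \ref{p3} gives the RNE law \(\hat\pi^B_0(\hat W^D_B\times W_B|_1)\) for the \(B\)-risk-only price \(\langle Y\rangle^{\hat\pi}_t\). Restoring the \(B\)-sure premium \(\check{RP}(t)\) changes the benefit factor by the Girsanov drift \(-\check r\,dt\). Since \(\check{RP}\) is common to both outcomes, the benefit factor becomes \(\check W^T_B|_1\), which yields the stated product form (via (\ref{rawdrift})).

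From (\ref{S0A}), (\ref{twintwin}) and (\ref{fixedimp}),
\[
S_t = z_t + X_\Delta\hat\pi^+_t - \check{RP}(t),
\]
with \(X_\Delta\) constant. Therefore
\[
dS_t = dz_t - d\check{RP}(t) + X_\Delta\,d\hat\pi^+_t.
\]
Here \(-d\check{RP}(t)=\check r\,dt\), and \(dz_t=dY^{\mathbf 0}_t=dW|_t=\sigma dw_t\) after de-trending. This produces the first two terms of (\ref{fulldynamics}).

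\textbf{Step 2: filter the posterior.} Because \(\hat\pi^+_t=\hat L_t/(1+\hat L_t)\), where \(\hat L_t\) is the RNE odds, it suffices to derive the dynamics of \(\hat L_t\). By Girsanov on \([0,1-\zeta]\), using (\ref{bD}) de-trended so that \(\hat W^D_-\) is driftless with volatility \(\sigma^D_t\),
\[
d\log\hat L_t = \sigma^{lD}_t\,d\hat w^D_t - \tfrac12(\sigma^{lD}_t)^2dt,
\qquad \sigma^{lD}_t := r^D_{\Delta,t}/\sigma^D_t,
\]
where \(d\hat w^D_t = dD_t/\sigma^D_t\) is measured under the \(-\) law. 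Applying Itô's formula to \(\ell\mapsto \ell/(1+\ell)\), whose derivatives are \(\pi(1-\pi)\) and \(-2\pi^2(1-\pi)\) expressed in odds, gives
\[
d\hat\pi^+_t = (\sigma^{\hat\pi}_t)^2\big[\sigma^{lD}_t\,d\hat w^D_t + (1-\hat\pi^+_t)(\sigma^{lD}_t)^2dt\big].
\]
This is the innovation form: under the \(-\) law the drift is \(\hat\pi^-\,(\sigma^{lD})^2\). To express the dynamics path-wise in \(B\), I rewrite the noise \(d\hat w^D_t\) as \(\mathbb{I}_{\{+\}}(B)\sigma^{lD}_tdt+d\hat w^{D,B}_t\). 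Collecting terms should give a drift \((\mathbb{I}_{\{+\}}\circ sign-\hat\pi^+_t)(B)(\sigma^{lD}_t)^2\) against \(\hat w^D\) in the paper's convention. Substituting into \(X_\Delta\,d\hat\pi^+_t\) yields the third term of (\ref{fulldynamics}).

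\textbf{Main obstacle: the noise convention and the endpoint.} The step I expect to be hardest is reconciling which Brownian motion \(d\hat w^D_t\) denotes, so that the drift carries exactly \(\mathbb{I}_{\{+\}}-\hat\pi^+\) rather than \(\mathbb{I}_{\{+\}}-\hat\pi^-\) or a sign-flipped version. I will fix this by checking that, under the mixed RNE law \(\hat\pi^B_0\hat W^D_B\), the \(B\)-averaged drift vanishes, which is required for \(\hat\pi^+\) to be a martingale. The divergence of \(r^D\) and \(\sigma^D\) as \(t\to1\) only matters on the intervals \([0,1-\zeta]\). I will state the dynamic there and then extend it by letting \(\zeta\downarrow0\), using the Appendix identification of \(C_0[0,1)\) with \(C_0[0,\infty)\).
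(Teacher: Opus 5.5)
Your route is the paper's: Proposition \ref{p3} together with (\ref{assmption3})--(\ref{fixedimp}) makes \(\hat\pi^+_t\) a functional of \(\{D_t\}\) alone with \(X_\Delta\) constant, so \(dS_t=\check r\,dt+\sigma dw_t+X_\Delta\,d\hat\pi^+_t\). The last term is then the belief equation (\ref{rawdrift}) applied to the RNE log-likelihood ratio (\ref{w''}). Step 1 is fine.

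\textbf{The gap is in Step 2, which is exactly where the drift \((\mathbb{I}_{\{+\}}\circ sign-\hat\pi^+_t)(B)\) must come from.} You take \(d\log\hat L_t=\sigma^{lD}_t\,d\hat w^D_t-\frac{1}{2}(\sigma^{lD}_t)^2dt\), with \(\hat w^D\) a Brownian motion under the RNE \(-\) law. Then \(d\hat L_t/\hat L_t=\sigma^{lD}_t\,d\hat w^D_t\) exactly, because the \(-\frac{1}{2}\) cancels the It\^o correction. The second-order term of \(\ell\mapsto\ell/(1+\ell)\) contributes \(-(\hat\pi^+_t)^2\hat\pi^-_t(\sigma^{lD}_t)^2dt\). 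Hence
\[
d\hat\pi^+_t=(\sigma^{\hat\pi}_t)^2\big[\sigma^{lD}_t\,d\hat w^D_t-\hat\pi^+_t(\sigma^{lD}_t)^2dt\big].
\]
The drift \((1-\hat\pi^+_t)(\sigma^{lD}_t)^2\) you wrote is the drift relative to the Brownian motion of the RNE \(+\) law. Your gloss ``under the \(-\) law the drift is \(\hat\pi^-(\sigma^{lD})^2\)'' is therefore backwards: under the \(-\) law, \(\hat\pi^+\) must drift down to zero.

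As written, your substitution \(d\hat w^D_t=\mathbb{I}_{\{+\}}(B)\sigma^{lD}_tdt+d\hat w^{D,B}_t\) gives the bracketed drift \((1+\mathbb{I}_{\{+\}}(B)-\hat\pi^+_t)(\sigma^{lD}_t)^2\), not the target. So ``collecting terms should give'' does not follow from what precedes it. Your proposed sanity check would catch this: under the mixed RNE law this drift averages conditionally to \((\sigma^{lD}_t)^2\neq0\), so \(\hat\pi^+\) would fail to be a martingale.

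With the corrected It\^o output, the same substitution yields \((\mathbb{I}_{\{+\}}(B)-\hat\pi^+_t)(\sigma^{lD}_t)^2dt+\sigma^{lD}_t\,d\hat w^{D,B}_t\). This is precisely (\ref{rawdrift}) for (\ref{w''}), with your \(\hat w^{D,B}\) playing the role of the paper's \(B\)-conditional noise \(\hat w^D\), and the argument closes.

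\textbf{A smaller point on de-trending.} The paper's de-trending makes the reference law \(W^D_-\) driftless, not \(\hat W^D_-\). By (\ref{findRNE}), \(\hat W^D_-\) keeps the \(B\)-independent drift \(-\frac{1}{2}\hat\theta^D_t\). Your noise should therefore be \((dD_t+\frac{1}{2}\hat\theta^D_t\,dt)/\sigma^D_t\) rather than \(dD_t/\sigma^D_t\), unless \(\hat\theta^D\equiv0\). This is harmless once stated correctly, because a \(B\)-independent drift leaves the \(\hat W^D_+\)-versus-\(\hat W^D_-\) likelihood ratio depending only on \(r^D_{\Delta,t}\) and the data centred under \(\hat W^D_-\).
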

In general, the RNE law, despite its obvious existence, as \(dS_t(B)=-d\check{RP(t)}+d\langle{Y}\rangle_t^{\hat{\pi}}(B)\), is elusive: without (\ref{assmption3}-\ref{fixedimp}), the \(d\hat{\pi}^+_t\) or \(d{\pi}^+_t\) terms 'interact' with the \(dZ_t\) terms
.


\begin{rem}\label{ubi}Under economic pricing condition (\ref{kcon}), the above becomes canonical: \(\hat{\pi}^+_t=\Pi^+_t\) and \(d\hat{w}^D_t=dw^D_t\). The expected price-dynamic with respect to \(B\) under the reference \(B\)-risk beliefs \(\pi^+_t\) reads:\begin{equation}\label{klasic}
\check{r}
dt+\sigma dw^{}_{t}+\boldsymbol\sigma_t\big[\frac{\boldsymbol\mu_t}{\boldsymbol\sigma_t}dt+dw^{D}_{t}\big],\end{equation} where \(\boldsymbol\sigma_t:=X_{\Delta}\cdot(\sigma^{{\Pi}}_t)^2\sigma^{lD}_t\), \(\boldsymbol\mu_t:=X_{\Delta}\cdot(\pi^+_t-{\Pi}^+_t)(\sigma^{{\Pi}}_t\sigma^{lD}_t)^2\), and so ((\ref{canon2})):\begin{equation}\label{klasic'}
    \frac{\boldsymbol\mu_t}{\boldsymbol\sigma_t}=(K^{\frac{1}{2}}-K^{-\frac{1}{2}})\sigma^{{\Pi}
}_t\sigma^{\pi}_t\sigma^{lD}_t=\frac{K^{\frac{1}{2}}-K^{-\frac{1}{2}}}{X_{\Delta}}\cdot\frac{\sigma^{\pi}_t}{\sigma^{\Pi}_t}\cdot\boldsymbol\sigma_t.
\end{equation}Under \((K-1)\)-expansion (Remark \ref{1-2}), we recover the ubiquitous price-vs-volatility formula:
\begin{align}\label{capmlike}
    &\frac{\boldsymbol\mu_t}{\boldsymbol\sigma_t}\approx\bigg(\frac{K-1}{X_{\Delta}}\bigg)\boldsymbol\sigma_t\text{, that is, }\boldsymbol\mu_t\approx\bigg(\frac{K-1}{X_{\Delta}}\bigg)(\boldsymbol\sigma_t)^2.
\end{align}Classically the scaling factor is risk-aversion. The above puts it in more tangible terms
. \end{rem}

\begin{rem}\label{views}
      Post-disclosure, on \((1,T]\), we recover under '{hidden disclosure}' the usual hidden-regime setup, and under '{private disclosure}' the usual insider setup with '{initially enlarged filtrations}' (e.g. \cite{ankimk}). In both cases, with only benefit data and no prospect of risk-resolution on \((1,T]\), economic and so canonical asset-pricing applies
      .
    \end{rem}  
\subsection{NUPBR-Viability at Risk-Resolution}\label{emo}

FTAP-viability has been secured on the inferential basis \(\big(\mathcal{B}\times{\mathbf{\Omega}}_{1},\{\mathbf{F}_t\};\text{ }\pi^B_0\mathbf{W}_B\big)\). Yet the \emph{true-outcome basis} is \((\mathbf{{\Omega}}_{1},\{\mathbf{F}_t\};\text{ }\mathbf{W}_{b})\) if \(B=b\), where \emph{no} process that learns \(\{B=b\}\) 
can be FTAP-viable (Item-\ref{convergent}, Appendix \ref{infps}). This is an issue if \(\{B=b\}\) is the only physically possible scenario, or when agents hold non-equivalent beliefs about a \(B\)-risk that resolves pre-horizon. 

We show in Appendix \ref{weakp} that with respect to the physical outcome, price dynamic (\ref{fulldynamics}) is \emph{minimally viable} in the NUPBR sense, based on its compliance, as \(t\to 1\), to (\ref{finitereldrft}), Item-\ref{unique}, Appendix \ref{infps}. It excludes the 'worst' arbitrage and allows just unscalable classical arbitrage that has no sure-payoffs except upon \(B\)-disclosure.
 
 \begin{rem}
     As such, FTAP-viability may seem surplus. See \cite{kk} and \cite{fonz2} for "markets without martingale measures". Nevertheless inferential FTAP-viability is a good starting point and a good fit for ex-ante rationality.
 \end{rem}
 
\section{The Dominance of Conviction and Non-Anticipation}\label{noantpn}
For the the convenience of this section, we re-state below asset-price dynamic (\ref{fulldynamics}) 
with respect to outcome \(\{B=\mathbf{1}\}\) ('change'): 
\begin{align}\label{ori}
    d{S}
    _{t}(B)=&\big[
    {\check{r}
    }dt+\sigma dw_{t}\big]+
    \\\label{domi}
+&sign(\mathbf{1})X_{\Delta}\cdot (\sigma^{\hat{\pi}}_t)^2\big[(\mathbb{I}_{\{\mathbf{1}\}}-\hat{\pi}^\mathbf{1}_{t})(B)({{\sigma}}^{lD}_t)^2dt+{\sigma}^{lD}_td{\hat{{w}}^D_t}\big].\end{align} It resembles a \(B\)-sure asset ((\ref{ori})) with 'abnormal excess' ((\ref{domi})). This excess 
stems from \(B\)-risk and its data. 
Only \(B\)-sure 
price dynamics deliver the target drifts. 

\subsection{A Market of Heterogeneous Beliefs: 'Change' \emph{vs} 'Status Quo'}\label{canonbiased}In the setting of Section \ref{Bdata}, pre-disclosure, consider two trader-types: the '\(\mathbf{0}\)'-traders, sure of status quo
, and the inferential '\({\Pi}\)'-traders, unsure. They are otherwise identical: '\(\mathbf{0}\)'-traders believe in law \(\mathbf{W}_{\mathbf{0}}
\) and share status quo fair-price \(S^{\mathbf{0}}_t\), while '\({\Pi}\)'-traders believe in \({\pi}^{B}_0 \mathbf{W}_{B}\) 
and share economic, canonical, fair-price \(\langle S\rangle^{{\Pi}} _t\equiv S^{\mathbf{0}}_t+sign(\mathbf{1})X_{\Delta}\cdot{\Pi}^\mathbf{1}_t
\). 

Let there be classical (\(B\)-sure) assets 
offering the target drift \(\check{r}dt\). 
Traders seek \emph{excess} (to beat it): so 
'\(\mathbf{0}\)'-traders bet against 'change'; '\({\Pi}\)'-traders, for, unless overpriced. At any \(t<1\), given a market price of \({S}^{\mathbf{0}}_t+sign(\mathbf{1})\Delta S_t\),  all in logarithm, with \(0\le{\Delta S_t}\ll1\),
\begin{itemize}\item the '\(\mathbf{0}\)'-traders sell whenever \(sign(\mathbf{1})\Delta S_t>0\), and buy otherwise;\item the '\({\Pi}\)'-traders buy whenever \(sign(\mathbf{1})({\Pi}^\mathbf{1}_t-\Delta S_t/X_{\Delta})\ge0\), and sell otherwise.
\end{itemize}

Market-clearing details notwithstanding, directional conclusions can be drawn. We start with the setup of \cite{BH}, (\citeyear{BH98}) and \cite{BHW}: the relative 'fitness' of trader-factions is measured by some weighted-average of the excess realised to date, \(U_{\text{'}\mathbf{0}\text{'},t}\) and \(U_{\text{'}{\Pi}\text{'},t}\), so that the ratio \(M_t\) between the '\(\mathbf{0}\)'- and '\({\Pi}\)'-factions obey the limit law of standard \emph{stochastic discrete choice models} 
(\cite{BH}):\begin{equation}\label{mratio}M_t:=\exp[\beta_t\cdot(U_{\text{'}\mathbf{0}\text{'},t}-U_{\text{'}{\Pi}\text{'},t}+C_t)]>0,\end{equation}where \(\beta_t>0\) is an 'intensity of choice', 
and \(C_t\ge0\), the cost of being '\({\Pi}\)'- \emph{vs} '\(\mathbf{0}\)'-traders. The clearing price \({S}^{het}_t\) of this heterogenous market at any \(t<1\), by (2.7) of \cite{BH98}, is a \(M_t\)-weighted average of the corresponding homogeneous-market prices:\begin{align}\label{Shetro}
&{S}^{het}_t=\langle S\rangle^{\frac{{\Pi}^\mathbf{1}_t}{1+M_t}}_t\equiv
{S}^{\mathbf{0}}_t+sign(\mathbf{1})X_{\Delta}\cdot\frac{{\Pi}^\mathbf{1}_t}{1+M_t}.
\end{align}To be viable to the participants, the pricing coefficients 
must amount to some RNE \(B\)-risk beliefs \(\{\hat{\pi}^\mathbf{1}_{t}\}\) (Proposition \ref{p3}): \(\hat{\pi}^\mathbf{1}_{t}=
\frac{{\Pi}^\mathbf{1}_{t}}{1+M_{t}}<{\Pi}^\mathbf{1}_{t}\) 
and \(S^{het}_{t}=
{S}^{\mathbf{0}}_t+sign(\mathbf{1})X_{\Delta}\cdot\hat{\pi}^{\mathbf{1}}_t\). 

In a \emph{steady state} of such a market, the mix of trader-types is stable, and the average excess of their trades, nil. Consider the 'steadiness' of the possible market-price processes.
\subsection{'Status Quo' Wins}\label{domisq}

 \begin{enumerate}
    \item \emph{The '\({\Pi}\)'-trader fair-price \(\langle S\rangle ^{{\Pi}}_t
    \) is not a steady state.} It has dynamic (\ref{ori}-\ref{domi}) with \(\hat{\pi}^+_t={\Pi}^+_t\) and \(d\hat{w}^D_t=dw^D_t\). On any \((t,t+\Delta t<1]\), \(\Delta t\ll1\), 
    given time-\(t\) market price \({S}^{\mathbf{0}}_t+sign(\mathbf{1})\Delta S_t\), \(0\le{\Delta S_t}\ll1
    \), 
    the average p\&l of '\(\mathbf{0}\)'-traders, in excess of \( {\check{r}}\Delta t\), to \(\mathcal{O}(\Delta t\cdot\Delta S_t)\), is\footnote{
    Deviation \(|\delta S|\) from  \({S}^{\mathbf{0}}_t\) means an excess of \(|\delta S|\cdot\check{r}\delta t\) plus any due to price-change beyond \(\check{r}\delta t\) in \(\delta t\).}:\begin{itemize}
    \item
    \(
    {\check{r}}\Delta t\cdot\Delta S_t>0,\text{ if \(B=\mathbf{0}\)}\);
    \item \(({\check{r}}-({\sigma}^{lD}_t)^2)\Delta t\cdot\Delta S_t=:(1-RII_t)\cdot{\check{r}}\Delta t\cdot\Delta S_t,\text{ if } B=\mathbf{1}\);
    \end{itemize}
    the average p\&l of '\({\Pi}\)'-traders, in excess of \( \Delta t\cdot{\check{r}}\), to \(\mathcal{O}(\Delta t\cdot\Delta S_t)\), is:
\begin{itemize}\item\(
-(+)sign(\mathbf{1})
{\check{r}}\Delta t\cdot\Delta S_t\text{ for \(sign(\mathbf{1})({\Pi}^\mathbf{1}_t-\Delta S_t/X_\Delta)\ge0\text{ }(<0)\), if \(B=\mathbf{0}\);
}
\)\item\(
-(+)sign(\mathbf{1})(1-RII_t
)\cdot{\check{r}}\Delta t\cdot\Delta S_t\text{ for \(sign(\mathbf{1})({\Pi}^\mathbf{1}_t-\Delta S_t/X_\Delta)\ge0\text{ }(<0)\), if \(B=\mathbf{1}\)}.\)\end{itemize}Thus, if \(RII_t:=\frac{({\sigma}^{lD}_t)^2}{{\check{r}}}<1\), the '\(\mathbf{0}\)'-traders win regardless, and '\(\Pi\)'-traders lose whenever betting on 'change'
. 
As such, selection pressure (\ref{mratio}) diminishes the latter quickly, especially at high intensity of choice \(\beta_t\) or cost differential \(C_t\).

\begin{definition}\label{riilow}
    The \emph{relative information intensity} (RII) of any time \(t\) refers to the ratio \(RII_t:=\frac{({\sigma}^{lD}_t)^2}{{\check{r}}}\), and \emph{low RII}, to \(\{RII_t\le1\}\). 
\end{definition}   

\item \emph{The heterogenous-market clearing price \({S}^{het}_{t}=\langle S\rangle ^{\hat{\pi}}_t\) is not a steady state.} It has dynamic (\ref{ori}-\ref{domi}). The same argument and outcomes as above apply. 


\item\emph{The '\(\mathbf{0}\)'-trader fair-price \({S}^{\mathbf{0}}_t\) is a steady state.} It has dynamic (\ref{ori}), making expected excess in any pre-disclosure trading window vanish for all trader-types.
\end{enumerate}
Although it helps, none above require selection pressure (\ref{mratio}): all p\&l-expectations are known pre-trading; no optimising agents given low RII would 'fight' status quo benefits
.

\begin{proposition}\label{p'}
   The market of Section \ref{canonbiased}, with heterogenous beliefs about model- or event-risk \(B\), has a FTAP-viable market-clearing price \(S^{het}_{t}=\langle{S}\rangle^{\hat{\pi}}_{t}\) ((\ref{Shetro})) such that during periods of low RII (Definition \ref{riilow}) it has an apparent Status Quo Bias, \(\{\hat{\pi}^\mathbf{1}_{t}\}\ll\{{\Pi}^\mathbf{1}_{t}\}\), relative to the fair-price process 
  \(\langle S\rangle^{{\Pi}} _t\) of the inferential participants.
\end{proposition}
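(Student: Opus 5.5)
The proof has two parts: first check that the clearing price (\ref{Shetro}) is FTAP-viable, then show that under low RII the selection mechanism (\ref{mratio}) forces $M_t\gg1$, so that $\hat{\pi}^\mathbf{1}_t=\Pi^\mathbf{1}_t/(1+M_t)\ll\Pi^\mathbf{1}_t$.

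\emph{Viability.} By Proposition \ref{p3} and Corollary \ref{coro0'}, it is enough that the pricing coefficient $A^\mathbf{1}_t:=\Pi^\mathbf{1}_t/(1+M_t)$ is a continuous $\{\mathbf{F}_t\}$-adapted process with values in $(0,1)$ that can be realised as an RNE $B$-risk belief $\{\hat{\pi}^\mathbf{1}_t\}\sim\{\pi^\mathbf{1}_t\}$.

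I would build the RNE law directly. Set $\hat{\pi}^\mathbf{1}_0:=\Pi^\mathbf{1}_0/(1+M_0)$, which is equivalent to $\pi^B_0$ because both lie in $(0,1)$. Then let $\hat{W}^D_B$ be the Girsanov change of $W^D_B$ whose drift makes $A^\mathbf{1}_t$ satisfy the filtering equation
\[
d\hat{\pi}^\mathbf{1}_t=(\sigma^{\hat\pi}_t)^2\sigma^{lD}_t\,d\hat{w}^D_t
\]
that appears in (\ref{domi}). Concretely, write the Itô decomposition $dA^\mathbf{1}_t=\alpha_t\,dt+\gamma_t\,dw^D_t$, and take the Girsanov kernel $\theta_t$ so that $\alpha_t\,dt+\gamma_t\,dw^D_t$ is a $\hat{W}^D$-martingale. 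This requires $\gamma_t=(\sigma^{\hat\pi}_t)^2\sigma^{lD}_t$ up to an absorbed drift. Since $M_t$ is an exponential of adapted, continuous excess-utility processes, $A^\mathbf{1}$ is a continuous semimartingale. Equivalence on each $[0,1-\zeta]$ then follows from Novikov-type local boundedness of $\theta_t$.

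\textbf{This is the main obstacle.} The Girsanov kernel has to stay locally bounded on $[0,1-\zeta]$, and $M_t$ must not force $\gamma_t$ to vanish where $(\sigma^{\hat\pi}_t)^2\sigma^{lD}_t>0$. My first attempt would be to assume $U$, $\beta$ and $C$ are smooth functionals of $\Pi^\mathbf{1}_t$ and time. Then $A^\mathbf{1}_t=f(t,\Pi^\mathbf{1}_t)$ with $f$ strictly increasing in its second argument, and $\gamma_t$ is $\partial_\Pi f$ times the canonical volatility. If that fails, I would fall back on the weaker NUPBR-viability of Section \ref{emo}.

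\emph{Status Quo Bias.} I would use the p\&l expansions of Section \ref{domisq}, items 1–2, which are computed to $\mathcal{O}(\Delta t\cdot\Delta S_t)$.

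1. Under $RII_t<1$, for both outcomes $B\in\{\mathbf{0},\mathbf{1}\}$, the '$\mathbf{0}$'-traders' excess is positive, namely $\check r\Delta t\Delta S_t$ or $(1-RII_t)\check r\Delta t\Delta S_t$.
2. The '$\Pi$'-traders' excess is the negative of this whenever they bet on 'change'.
3. Integrating over the pre-disclosure window and taking the weighted averages gives $U_{'\mathbf{0}',t}-U_{'\Pi',t}\ge c\int (1-RII_s)\check r\,\Delta S_s\,ds>0$ on low-RII periods.
4. With $C_t\ge0$ and $\beta_t>0$, (\ref{mratio}) then makes $M_t$ increasing and large, geometrically in $\beta_t$.

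The self-consistency point is that $\Delta S_t=X_\Delta\hat{\pi}^\mathbf{1}_t$, so the excess itself is positive as long as $\hat{\pi}^\mathbf{1}_t>0$. The feedback therefore drives the system toward the steady state $S^\mathbf{0}_t$ of item 3.

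I would close with a short comparison argument. Let $\tau$ be the first time in a low-RII period at which $M_t$ falls below a given large level. On $[\tau,\tau+\Delta t]$ the drift of $\log M$ is strictly positive, which contradicts the definition of $\tau$. Hence $M_t$ stays above that level and $\hat{\pi}^\mathbf{1}_t/\Pi^\mathbf{1}_t=(1+M_t)^{-1}\ll1$.
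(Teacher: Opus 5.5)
There is a genuine gap in each of your two parts.

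\textbf{Viability.} Your viability step does not go through as set up. A Girsanov change alters drifts only, so the kernel cannot adjust the diffusion coefficient $\gamma_t$ of $A^{\mathbf{1}}_t=\Pi^{\mathbf{1}}_t/(1+M_t)$. The requirement $\gamma_t=(\sigma^{\hat{\pi}}_t)^2\sigma^{lD}_t$ is therefore a hard constraint on $M_t$, not something that holds ``up to an absorbed drift''.

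Strict monotonicity of $f(t,\Pi)$ is not enough either. The identity $\partial_\Pi f\cdot\Pi(1-\Pi)=f(1-f)$ is exactly the unit-slope condition of Corollary \ref{coro0}: $\log\frac{A^{\mathbf{1}}_t}{1-A^{\mathbf{1}}_t}-\log\frac{\Pi^{\mathbf{1}}_t}{1-\Pi^{\mathbf{1}}_t}$ must be a finite-variation (drift) process. For $A^{\mathbf{1}}=\Pi^{\mathbf{1}}/(1+M)$ this forces $M_t=(c_t-1)(1-\Pi^{\mathbf{1}}_t)$ with $c_t$ of finite variation. A generic $M_t$ from (\ref{mratio}) is built from realised p\&l, so it is driven by the data noise itself and has no reason to take this form. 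Retreating to NUPBR would not give the FTAP claim you are asked to prove.

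The paper does not derive viability from (\ref{mratio}) at all. It takes viability as the defining constraint on the clearing coefficient (``to be viable to the participants, the pricing coefficients must amount to some RNE $B$-risk beliefs'', via Proposition \ref{p3}). It then simply identifies $\hat{\pi}^{\mathbf{1}}_t$ with $\Pi^{\mathbf{1}}_t/(1+M_t)$, so the substance of the proposition is the bias.

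\textbf{Status Quo Bias.} You use the right input, the p\&l expansions of Section \ref{domisq}, but your closing comparison argument fails. Those expansions are conditional expectations, whereas $U_{\mathbf{0},t}$ and $U_{\Pi,t}$ average \emph{realised} excess. So your step-3 inequality holds only on average, and $\log M_t$ is an It\^o process with a non-degenerate martingale part (it inherits the $d\hat{w}^D_t$ term of (\ref{domi})). Such a process, whatever its drift, dips below a level immediately after first reaching it, so nothing is contradicted at $\tau$.

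Even for a deterministic $\log M$, your argument only keeps $M$ above a level it already exceeds; it does not make $M_t\gg1$. That would require large $\beta_t$ or $C_t$. Moreover, the drift you rely on is proportional to $\Delta S_t=X_\Delta\hat{\pi}^{\mathbf{1}}_t$, so it weakens as the bias builds.

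The missing idea is the paper's dominance argument, for which selection is explicitly not needed. When $RII_t<1$, a bet on `change' has negative expected excess under \emph{both} outcomes: $-\check{r}\Delta t\,\Delta S_t$ if $B=\mathbf{0}$, and $-(1-RII_t)\check{r}\Delta t\,\Delta S_t$ if $B=\mathbf{1}$. It is therefore dominated whatever one believes, and this is known before trading. Hence no optimising agent, $\Pi$-traders included, takes that bet. No price with non-negligible $\hat{\pi}^{\mathbf{1}}_t$ (in particular $\langle S\rangle^{\Pi}_t$ or $S^{het}_t$) is a steady state, while $S^{\mathbf{0}}_t$, with zero expected excess for everyone, is. That pins $\hat{\pi}^{\mathbf{1}}_t$ near $0$ and gives $\hat{\pi}^{\mathbf{1}}_t\ll\Pi^{\mathbf{1}}_t$ directly, with (\ref{mratio}) only accelerating it.
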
    

Even when \(RII_t>1\), the scenarios are belief-weighted, based on \(\frac{RII_t-1}{{\pi}^\mathbf{0}_t/{\pi}^\mathbf{1}_t}\text{ \emph{vs} }1\). If the beliefs are
ambiguous, the Inertia Axiom of Bewley (\cite{Bew}) applies: abandon status quo only if doing so dominates not doing so 
    (e.g. as \(RII_t\to\infty\)); this principle is even more sensible given Remark \ref{anythingatall}. The market of heterogenous beliefs about \(B\)-risk thus can be trapped, surprisingly, in a 
\(B\)-sure state, despite \(B\)-data and \(B\)-inferential traders.

\begin{rem}\label{standard&biased}The '\(\mathbf{0}\)'-traders, aka. the \(\mathbf{0}\)-insiders (of Introduction), win, against the inferential '\({\Pi}\)'-
traders, aka. \(\mathbf{0}\)-outsiders. The 'trapped' market has nACMM in the inferential basis, thus offering arbitrage to the \(\mathbf{0}\)-outsiders. Yet it persists, as bets that pay only if \(B=\mathbf{1}\) and only at \(t=1\) are suboptimal except near \(t=1\). Viability thus favours \(\mathbf{0}\)-insiders, the \(B\)-sure, not \(\mathbf{0}\)-outsiders, the \(B\)-unsure: there is no dichotomy between 'viability' and 'efficiency'.
\end{rem}

\begin{rem}\label{stqb}The above is a rational market mechanism for Status Quo Bias, a 'behavioural bias' (\cite{sq1}, \cite{sq2}). It allows an intrinsically predictable risk known to many to be 'still missed'; the market resembles, arbitrarily closely, one where the risk-outcome appears 'out of the blue' (causing discontinuity). Without public disclosure (so despite being known to some participants), the market can be dismissive of the risk indefinitely; hence "a market can be irrational longer than you can be solvent".\end{rem}


\section{Summary and Discussion}\label{con-dis}
Asset-pricing under model- or event-risk are studied in a standard setting but for the data enabling {pre-horizon} risk-resolution (Proposition \ref{p3}, Corollary \ref{coro0'})
. Under common conditions, familiar and intuitive 'canonical' price-of-risk emerges (Corollary \ref{coro0}-\ref{coro1}, Remark \ref{1-2}-\ref{ubi}).

Our risk-neutral equivalent approach ensures FTAP-viability, aka. NFLVR, which is known to be the same as being jointly NUPBR and NCA ('No Classical Arbitrage'). NFLVR prices in the inferential basis are however only NUPBR in the true-outcome basis (Proposition \ref{p4}, Appendix \ref{weakp})
. It means that agents with foreknowledge have only classical arbitrage that has neither {pre-resolution sure-payoff nor scalability} (due to the necessity of interim credits). It also means 'anything goes' pre-resolution. For a predictable risk of change, it is {suboptimal} to long 'change' unless pro-'change' dataflow is sufficiently intense (Proposition \ref{p'}). Indeed, asset-prices that dismiss the risk act as a \emph{stable attractor}, capable of delaying anticipation to 'the last minute'. A non-anticipative market, which jumps upon 'change'-confirmation, has nACMM and so offers arbitrage in the inferential basis. Yet, it persists, as shown.

\begin{rem}\label{dis1}Were the above not so, no market can be viable against {any} doubts or fear. This mechanism 
is reinforced by Bewley's Inertia Axiom for risks that are ambiguous. Under common conditions then, 'viability' and 'efficiency', far from being dichotomous, are mutually enhancing. However, this is at the expense of 'anticipation', a key goal of any market. By Section \ref{noantpn}, '{less continuity and/or less predictability}' would incentivise anticipatory positioning, thus weakening the very mechanism that thwarts 'the viability-efficiency dichotomy'. How to achieve just the right balance is of great practical and theoretical interest.
\end{rem}
\begin{rem}\label{jhi}
Recall that 
the {ex-ante} \(B\)-risk premia are set by reference \(B\)-risk beliefs \(\{{\pi}^B_t\}\); but the {ex-post} risk-premia \(\mathbf{rp}_t:=\langle Y\rangle^{\mathbf{p}}_t-S_t\) under {true conditional probabilities} \(\{\mathbf{p}^B_t\}\) 
satisfy:\begin{equation}\label{ex-post-dft}
    \mathbf{rp}_t-\check{RP}(t)=B\_RP_t+
(\mathbf{p}^+_t-{\pi}^+_t){X}_{\Delta},\text{ }t\le 1,\end{equation}by (\ref{yt}-\ref{S0A}) and Proposition \ref{p3}. 
The RHS are 'joint', no study of risk-pricing \(B\_RP_t=(\pi^+_t-\hat{\pi}_t)X_\Delta\) can be without one on \emph{bias} \((\mathbf{p}^+_t-{\pi}^+_t)X_\Delta\), a joint hypothesis problem. If the asset is 'equity index', one may wish to claim that 'historical excess' (LHS) is due to 'model-risk premia' (RHS); this is dubious without ruling out the bias term (\cite{obe}). Indeed, by Proposition \ref{p'}, the market-implied \(B\_RP_t\) scarcely reflects what the inferential participants believe
. 
The RHS, \(\sim\mathbf{p}^+_tX_\Delta\), can reach \(X_\Delta\) if RII remains low while \(B\)-risk builds
. 
\end{rem}

\begin{appendix}
\appendix
\section*{APPENDIX}
\section{Properties of Binary Inferential Testing}\label{infps}
\begin{enumerate}

\item\label{takinglimit}\emph{Regular Continuous-Time Inference.} The sequential testing of given measure-pair \({W}
    _{\pm}\) relies on (log) likelihood-ratios (log-LR): \(l^{\frac{+}{-}}_t:=\log L^{\frac{+}{-}}_t:=\log\frac{d{W}
    _{+}|_t}{d{W}
    _{-}|_t}\), to which Bayes' Rule applies: 
    \(l^{\frac{+}{-}}_{s+t}=l^{\frac{+}{-}}_{s}+l^{\frac{+}{-}}_{t}\), \(\forall s,t\in\mathbb{R}^+\).
    For homogeneous data \(d{W}_B|_\tau=\mathbb{I}_{\{+\}}(B)r_\Delta d\tau +\sigma dw_\tau\), \(B\in\mathcal{B}:=\{+,-\}\), driven by standard Wiener noise \(w\) on \({\Omega}:=C_0[0,\infty)\), the log-LR process \(l^{\frac{+}{-}}_\tau\) satisfies (\cite{PShir}): given \emph{signal-to-noise} \(\sigma^l:=\frac{r_\Delta}{\sigma}\),\begin{equation}\label{w}dl^{\frac{+}{-}}_\tau(B)=(-1)^{\mathbb{I}_{\{-\}}(B)}\frac{(\sigma^l)^2}{2}d\tau + \sigma^l dw_\tau=\frac{dL^{\frac{+}{-}}_\tau}{{L^{\frac{+}{-}}_\tau}}(B)-\frac{(\sigma^l)^2}{2}d\tau.\end{equation} 

\item{\emph{Change of Clock.}\label{ckchng} A suitable \emph{time-change} (\cite{PShir}), \(
\tau\mapsto{t}
\), yields time-varying data, with continuous and uniformly bounded 
drift \(\{r_{{\Delta,t}}\}\) and volatility \(\{\sigma_t\}\), with respect to time-\(t\) standard Wiener noise. Test dynamic becomes: with \(\sigma^l_t:=\frac{r_{{\Delta,t}}}{\sigma_t}\),
\begin{align}
    \label{w'}
&dl^{\frac{+}{-}}_t(B)=
(-1)^{\mathbb{I}_{\{-\}}(B)}\frac{(\sigma^{l}_t)^2}{2}dt+ \sigma^{l}_tdw_t=\frac{dL^{\frac{+}{-}}_t}{L^{\frac{+}{-}}_t}(B)-\frac{(\sigma^l_t)^2}{2}dt;
\\\label{rawdrift}  
&\frac{d\pi^+_{t}}{(\sigma^\pi_t)^{2}}(B)= \big(\mathbb{I}_{\{+\}}-\pi^+_t\big)(B)\cdot(\sigma^l_t)^2dt+\sigma^l_tdw_t,
\end{align}
where \(\{\pi^{\pm}_t\}\) is the resulting \(B\)-beliefs, 
a martingale on \(\big(\mathcal{B}\times\Omega,\{\mathcal{F}_{t}\};\text{ }\pi^B_0\times W_B\big)\). 
}

\item\label{contim''}\emph{Finite-Time Resolution.} Standard setting has resolution-time \(T_{\mathcal{B}}=\infty\). 
Finite times, say \(T_{\mathcal{B}}=1
\), can be treated on a basis that is the image of the usual via a \emph{smooth monotone invertible map} \({\Phi}:[0,1)\to [0,\infty)\). Structures on \(C_0[0,\infty)\) can be pulled-back to \(C_0[0,1)\) via: \(\Phi^*g:=g\circ\Phi,\text{ }g\in C_0[0,\infty)\), its uniform-convergence metric included. The construction of Wiener measure \(w\) on \(C_0[0,\infty)\) (\cite{whitt}) then can be repeated on \(C_0[0,1)\), effectively pushed-forward: \((\Phi^*)_*dw(f):=dw\circ(\Phi^*)^{-1}(f),\text{ }f\in C_0[0,1)\).

\item\label{contim'}For applications, it is useful to have the basis explicitly. Consider the series of Wiener spaces \(C_0[0,n]\), and its image, \(C_0[0,1^-_n]\), \(1^-_n:=\Phi^{-1}[n]\), \(n\in\mathbb{N}\). All are \(C_0[0,1]\) up to normalisation, and that each restriction \({\Phi}|_{[0,1^-_n]}\) is a simple time-change (smooth and deterministic). On \(C_0[0,1^-_n]\), \(n\) fixed, a time-\(\tau\) standard noise \(\{dw_\tau\}|_n\) becomes \(\{\sqrt{\Phi'_t}\cdot{dw_t}\}|_{1^-_n}\) 
in time-\(t\) standard noise. 
Over \([0,1^-_n]\), data dynamic takes form (\ref{bD}), and inference, (\ref{w'}-\ref{rawdrift}), where drift and squared-volatility diverge with \(\Phi'_t\) as \(t\to1\). 

\item{\label{convergent}\emph{Inferential Resolution.} We have \(|l^{\frac{+}{-}}_t|
<\infty\) almost surely where equivalence \({{W}}_{+}|_t\sim{W}_{-}|_t\) holds (Radon-Nikodym Theorem). For Wiener-drift tests (\ref{w}-\ref{w'}), it implies:
\begin{equation}\label{resovlingls'}E\int_0^t(\sigma^l_u)^2du\equiv E\int_0^{t}\big(\frac{r_{\Delta,u}}{\sigma_u}\big)^2du<\infty,
\end{equation}with 
\(E\) denoting expectation 
under \emph{either} law \(W_\pm\). 
Such tests resolve, \(\lim_{t\to T_{\mathcal{B}}}|l^{\frac{+}{-}}_t|\to\infty\), 
\emph{iff.} the test measure-pairs are mutually singular, \({{W}}_{+}\perp{W}_{-}\)
. That is: 
\begin{equation}\label{resovlingls}\lim_{t\to T_{\mathcal{B}}}E\int_0^{t}(\sigma^l_u)^2du\equiv\lim_{t\to T_{\mathcal{B}}}E\int_0^{t}\big(\frac{r_{\Delta,u}}{\sigma_u}\big)^2du=\infty.
\end{equation}Risk-resolution thus 
precludes martingale measures for the log-LR process\footnote{\label{pisignaltonoise}This is obvious for belief process \(\{\pi^{\overline{b}}_t(b)\}\in[0,1]\) on \([0,T_{\mathcal{B}}]\) as \(W_{b}(\text{ }\{\text{ }\pi^{\overline{b}}_{T_{\mathcal{B}}}(b)=0\text{ }\}\text{ })=1\). Nevertheless, note \(\frac{1}{2}E\int_0^{T_{\mathcal{B}}}(\pi^{\overline{b}}_u(b)\cdot\sigma^l_u)^2du<\infty\), as \(\frac{1}{2}(\pi^{\overline{b}}_u(b)\cdot\sigma^l_u)^2
=d\log[\pi^b_u(b)]-\pi^{\overline{b}}_u(b)\cdot\sigma^l_udw_u\) ((\ref{rawdrift})). A like finite 'squared integral' not yielding a martingale measure is mentioned in \cite{ankimk} (after Theorem 2.10).
} \(\{l^{\frac{+}{-}}_t(b)\}\).
}
\item{\label{tails}\emph{Adjacency.} Log-LR process \(\{l^{\frac{+}{-}}_t\}\) and \(\{\hat{l}^{\frac{+}{-}}_t\}\), resolving and so divergent, with respective test measure-pair \({W}_\pm\) and \(\hat{{W}}_\pm\) for which equivalence \({{W}}_{\pm}\sim \hat{W}_\pm\) holds, 
can differ at most by a finite amount almost surely (Radon-Nikodym Theorem): at any \(t<T_{\mathcal{B}}\), to which the restriction of all four measures are equivalent, we have,\begin{equation}\label{logtrick}\hat{l}^{\frac{+}{-}}_t-l^{\frac{+}{-}}_t=\log{\frac{d\hat{{W}}_{+}|_t}{d{W}_+|_t}}-\log{\frac{d\hat{{W}}_{-}|_t}{d{W}_{-|_t}}};\end{equation}the RHS is almost surely finite as \(t\to T_{\mathcal{B}}\), by \({{W}}_{\pm}\sim \hat{W}_\pm\). Write \(\{\hat{l}^{\frac{+}{-}}_t\}\approx\{l^{\frac{+}{-}}_t\}\) for adjacency.
}

\item\emph{Wiener Adjacency and Equivalence.}\label{tailsame} Consider Wiener test measure-pair \(\hat{{W}}_\pm\) of the form \(d\hat{{W}}_\pm|_t=-\frac{1}{2}\hat{\theta}_{\pm,t}dt+d{W}_\pm|_t\) relative to the original pair \({W}_\pm\), so that \(\hat{r}_{\Delta,t}={r}_{\Delta,t}-\frac{1}{2}\hat{\theta}_{\Delta,t}\). Its log-LR dynamic \(d\hat{l}^{\frac{+}{-}}_t\) has form (\ref{w'}), in terms of \(d\hat{w}_t:=\frac{d\hat{W}_-|_t}{\sigma_t}\) and \(\hat{\sigma}^l_t:=\frac{\hat{r}_{\Delta,t}}{\sigma_t}\). 
 If the measure generated by \(\{d\hat{l}^{\frac{+}{-}}_u\}|_t\) up to any \(t<T_{\mathcal{B}}\), is equivalent to that by \(\{d{l}^{\frac{+}{-}}_u\}|_t\), call the two log-LR processes equivalent and write \(\{\hat{l}^{\frac{+}{-}}_u\}|_t\sim\{{l}^{\frac{+}{-}}_u\}|_t\). Such equivalence in the Wiener setting demands \(\hat{\sigma}^l_u=\sigma^l_u\), so \(\hat{\theta}_{\Delta,u}=0
 \) and \(d\hat{{W}}_{\pm}|_t=-\frac{1}{2}\hat{\theta}_{t}dt+d {W}_\pm|_t\) for a \(B\)-{independent} drift \(\hat{\theta}_{t}\). Then, noting \((\frac{\hat{\theta}_t}{r_{{\Delta},t}}\cdot\sigma^l_t)^2\equiv(\frac{\hat{\theta}_t}{\sigma_t})^2\), we have: for \(t<T_{\mathcal{B}}\),\begin{equation}\label{lhat=l+drft}E |\hat{l}^{\frac{+}{-}}_t-l^{\frac{+}{-}}_t|=
 \frac{1}{2}E\int_0^t\frac{|\hat{\theta}_u|}{r_{\Delta,u}}(\sigma^l_u)^2du\equiv
 \frac{1}{2}E\int^t_0\frac{|\hat{\theta}_u|\cdot r_{\Delta,u}}{(\sigma_u)^2}du.\end{equation}Note that Novikov's Condition is met by the relative drift 
of \(\{\hat{l}^{\frac{+}{-}}_t\}\) \emph{vs} \(\{l^{\frac{+}{-}}_t\}\) \emph{iff.} it is met by that 
of \(\hat{W}_\pm|_t\) \emph{vs} \(W_\pm|_t\). Hence, on any \([0,t]\), \(t<T_{\mathcal{B}}\), the following are equivalent:\begin{equation}\label{neat}\hat{W}_\pm|_t\sim W_\pm|_t\text{ }\Longleftrightarrow\text{ }\{\hat{l}^{\frac{+}{-}}_u(\pm)\}|_t\approx\{{l}^{\frac{+}{-}}_u(\pm)\}|_t\text{ }\Longleftrightarrow\text{ }\{\hat{l}^{\frac{+}{-}}_u(\pm)\}|_t\sim\{{l}^{\frac{+}{-}}_u(\pm)\}|_t.\end{equation}

\item{\label{unique}\emph{Extension to Resolution-Time \(T_{\mathcal{B}}\).} 
Recall that no resolving log-LR processes can have martingale measures as \(t\to T_{\mathcal{B}}\) (Item-\ref{convergent}); so the last part of (\ref{neat}) has no meaning into resolution \(T_{\mathcal{B}}\). 
Adjacency into resolution however, 
by (\ref{lhat=l+drft}), does, provided:\begin{align}\label{finitereldrft}&\lim_{t\to T_{\mathcal{B}}}E\int^{t}_0\frac{|\hat{\theta}_u|}{r_{\Delta,u}}(\sigma^l_u)^2du=\lim_{t\to T_{\mathcal{B}}}E\int^{t}_0\frac{|\hat{\theta}_u|\cdot r_{\Delta,u}}{(\sigma_u)^2}du<\infty,\text{ and so,}\\\label{novs}&\lim_{t\to T_{\mathcal{B}}}E\int^{t}_0\big(\frac{\hat{\theta}_u}{r_{\Delta,u}}\big)^2(\sigma^l_u)^2du=\lim_{t\to T_{\mathcal{B}}}E\int^{t}_0\big(\frac{\hat{\theta}_u}{\sigma_u}\big)^2du<\infty.\end{align}This is sufficient for asset-pricing (Appendix \ref{weakp}), also trivially true for log-LR pairs arisen from equivalent test measure-pairs \(\hat{W}_\pm\sim W_\pm\) 
already given on \(C_0[0,T_{\mathcal{B}})\) ((\ref{logtrick})).
}

\end{enumerate}

\section{Proposition \ref{p3}, Corollary \ref{coro0}-\ref{coro0'} and NUPBR-Viability}\label{proofp3} 
\subsection{FTAP-Viable Pricing in the Inferential Basis}\label{proofp3.1} It is sufficient to prove it for \(B\)-risk only pricing: \(\langle{Y}\rangle^A_t:=Y_t-B\_RP_t=Y^-_t+X_\Delta(t)\cdot A^+_t\) ((\ref{S0A})). Recall that expected benefit \(Y_t\equiv\langle{Y}\rangle^\pi_t=Y^-_t+X_\Delta(t)\cdot \pi^+_t\) ((\ref{yt})) is a natural martingale in the inferential basis \(\big(\mathcal{B}\times{\mathbf{\Omega}}_{1},\{\mathbf{F}_t\};\text{ }\pi^B_0\mathbf{W}_B\big)\), where \(\mathbf{{\Omega}}^{1}:=C_0[0,1)\times C_0[0,1]\), \(\{\mathbf{F}_t\}:=\{\mathcal{F}^D_t\otimes\mathcal{F}_t\}\), the natural filtration of compound-data \(\{(D_t, Z_t)\}\), and law \(\pi^{B}_0{\mathbf{W}}_{B}:=\pi^{B}_0(W^D_B\times W_B|_{1})\) reflects the \(B\)-conditional independence of the two data-streams.

FTAP-viability in the Wiener setting requires \(d\langle{Y}\rangle^\pi_t\) and \(d\langle{Y}\rangle^A_t\) to differ only by drifts. Any RNE measure \(\hat{\pi}^B_0\hat{\mathbf{W}}_{B}\) thus must preserve the \(B\)-conditional independence above, and so, for \(B\)-risk only pricing, have the form \(\hat{\mathbf{W}}_{B}=\hat{W}^D_B\times {W}_B|_{1}\), \(\hat{W}^D_B\sim{W}^D_B\).

For term-matching between 
  \(dA^+_t\) and \(d\pi^+_t\)
  , it is easier to work with \(\hat{l}^{\frac{+}{-}}_t:=\log\frac{A^+_t}{A^-_t}=:\log \hat{L}^{\frac{+}{-}}_t\), well-defined given \(0<\{A^+_t\}<1\) on any \([0,1-\zeta]\), \(\zeta\ll1\)
  . The dynamics of \(d\hat{l}^{\frac{+}{-}}_t\) and \(dA^+_t\) have the form and relationship of (\ref{w'}) and (\ref{rawdrift}) respectively. As the reference log-LR process is a sum \({l}^{\frac{+}{-}}_t={l}^{D\frac{+}{-}}_t+{l}^{Z\frac{+}{-}}_t\) of two independent ones, \(\{\mathcal{F}^D_t\}\)- and \(\{\mathcal{F}_t\}\)-adapted respectively, 
we have \(\hat{l}^{\frac{+}{-}}_t=\hat{l}^{D\frac{+}{-}}_t+\text{ }{l}^{Z\frac{+}{-}}_t\) (given viability), with the \(\{\mathcal{F}_t\}\)-adapted part unmodified under \(B\)-risk only pricing, 
and the \(\{\mathcal{F}^D_t\}\)-adapted part, equivalent to \({l}^{D\frac{+}{-}}_t\) up to any \(1-\zeta\)
: meaning there exists 
a \(B\)-independent drift \(\{\hat{\theta}^{D}_u\}\) such that 
up to any \(t\le1-\zeta\) the following hold ((\ref{neat})),\begin{align}\label{findRNE}
d\hat{W}^{D}_\pm|_t&=-\frac{1}{2}\hat{\theta}^{D}_tdt+dW^D_\pm|_t\sim dW^D_\pm|_t,\\    
    \label{w''}
d\hat{l}^{D\frac{+}{-}}_t(B)&=\frac{1}{2}(-1)^{\mathbb{I}_{\{-\}}(B)}(\sigma^{lD}_t)^2dt+\sigma^{lD}_td\hat{w}^D_t,
\end{align}where \(d\hat{w}^D_t:=\frac{d\hat{W}^{D}_-|_t}{\sigma^D_t}=-\frac{1}{2}\frac{\hat{\theta}^{D}_{t}}{ r^{D}_{\Delta,t}}\cdot\sigma^{lD}_tdt+dw^D_t
\). 
The above hold as \(t\to1\) (
Item-\ref{unique}, Appendix \ref{infps}). Proposition \ref{p3} and Corollary \ref{coro0'} follow. \(\blacksquare\)

\subsection{Canonical FTAP-Viable Pricing in the Inferential Basis}\label{coro1proof}

Under Proposition \ref{p3}, condition (\ref{kcon}) demands \(\hat{l}^{\frac{+}{-}}_t(\pm)= g[{l}^{\frac{+}{-}}_t(\pm)]\) where \(g\) is 2-differentiable (and time-homogeneous in usual settings). Both \(d{l}^{\frac{+}{-}}_t\) and \(d\hat{l}^{\frac{+}{-}}_t\) follow the log-LR dynamic (\ref{w'}). 
Under FTAP-viability, only their drifts may differ, which, by Ito's Lemma, implies \(g'=1\) (due to the \(g'\cdot d{l}^{\frac{+}{-}}_t\) term) and 
hence Corollary \ref{coro0}.

In the context of Appendix \ref{proofp3.1} (\(B\)-risk only pricing), we have, by integrating (\ref{w''}),\begin{equation}\label{corolaw}
\bigg(\frac{L^{\frac{+}{-}}_t}{L^{\frac{+}{-}}_0}\bigg)\bigg/\bigg(\frac{\hat{L}^{\frac{+}{-}}_t}{\hat{L}^{\frac{+}{-}}_0}\bigg)=E\text{ }\exp
{\frac{1}{2}\int_0^{t}\frac{\hat{\theta}^{D}_{u}}{r^{D}_{\Delta,u}}\cdot(\sigma^{lD}_u)^2du}
<\infty,\end{equation}up to any \(t\le1\) ((\ref{finitereldrft}-\ref{novs})). 
As \(\{\hat{\theta}^{D}_{u}\}=0\) under canonical pricing, Corollary \ref{coro1} follows. \(\blacksquare\)

\subsection{Minimal Viability in the Realised Stochastic Basis}\label{weakp}

We proceed under (\ref{assmption3}-\ref{fixedimp}) for brevity, without affecting the conclusion, nor the argument\footnote{\label{sic'}
    Without (\ref{assmption3}-\ref{fixedimp}), inference depends on benefit data \(\{Z_t\}\) as well as on \(\{D_t\}\), any inferential contribution from \(\{Z_t\}\), controlled by \(\int_0^{t}
(\sigma^{l}_u)^2du\), is finite throughout and so irrelevant to the viability question.}. Given 
FTAP-viable dynamic (\ref{fulldynamics}) under RNE law \(\hat{\pi}^B_0(\hat{W}^D_B\times\check{W}^T_B|_{1}) \sim{\pi}^B_0({W}^D_B\times{W}_B|_{1})\) in the inferential basis \(\big(\mathcal{B}\times \mathbf{\Omega}_{1},\{\mathbf{F}_t\};\text{ }{\pi}^B_0({W}^D_B\times{W}_B|_{1})\big)\), we examine its viability in the outcome-basis \(\big(\mathbf{\Omega}_{1},\{\mathbf{F}_t\};\text{ }{W}^{D}_b\times W_b|_{1}
\big)\) under \(\{B=b\}\). The latter is the basis of an otherwise identical trader 'knowing' \(B=b\): our question is one of \emph{initially enlarged filtrations} by \(\{B=b\}\).
\begin{proposition}\label{p4}
  If an asset-price process under model- or event-risk \(B\in\{+,-\}\) is FTAP-viable in the inferential basis, in which the risk-belief process \(\{\pi^\pm_t\}\) (equivalently, \(\{{L}^{\frac{+}{-}}_t\}\) or \(\{{l}^{\frac{+}{-}}_t\}\)) is continuous throughout, then it is only NUPBR in the realised, true-outcome, basis.  
\end{proposition}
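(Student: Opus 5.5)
The plan has two parts. First, show that NUPBR holds in the outcome basis $\big(\mathbf{\Omega}_{1},\{\mathbf{F}_t\};\text{ }{W}^{D}_b\times W_b|_{1}\big)$. Second, show that NFLVR (FTAP-viability) must fail there. Throughout, work under (\ref{assmption3}-\ref{fixedimp}) as in the text, so that the $B$-risk part of the price is $X_\Delta\cdot\hat{\pi}^+_t$ with dynamic (\ref{fulldynamics}). Footnote \ref{sic'} already disposes of any finite inferential contribution from $\{Z_t\}$.

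\emph{NUPBR.} I will use the characterisation of NUPBR in continuous semimartingale markets (\cite{kk}, \cite{fonz1}): it holds iff there is a strictly positive local martingale deflator. Equivalently, it holds iff the price can be written as $dS=\lambda\,d\langle M\rangle+dM$ with mean-variance tradeoff $\int_0^1\lambda^2\,d\langle M\rangle<\infty$ almost surely (structure condition).

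Under $\{B=b\}$, the $b$-specific drift of (\ref{fulldynamics}) relative to its local-martingale part $X_\Delta(\sigma^{\hat\pi}_t)^2\sigma^{lD}_t\,dw^D_t$ has a market price of risk $\lambda_t$. Up to the $\hat\theta^D$ correction, this is $(\mathbb{I}_{\{+\}}(b)-\hat\pi^+_t)\sigma^{lD}_t$, i.e.\ $\hat\pi^{\overline b}_t\sigma^{lD}_t$ in absolute value. I will then show
\[
\int_0^1 (\hat\pi^{\overline b}_u\sigma^{lD}_u)^2\,du<\infty \quad W_b\text{-a.s.}
\]
This is footnote \ref{pisignaltonoise} applied to the RNE beliefs: integrating $\frac12(\pi^{\overline b}_u\sigma^l_u)^2du=d\log\pi^b_u-\pi^{\overline b}_u\sigma^l_u dw_u$ gives a finite bound, because $\log\pi^b_t\to0$ under $W_b$. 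The remaining discrepancy between $\hat\pi$ and $\pi$ is controlled by adjacency, (\ref{logtrick}) and (\ref{finitereldrft}-\ref{novs}) of Item-\ref{unique}. Together these give $\hat l^{\frac{+}{-}}_t\approx l^{\frac{+}{-}}_t$ up to $t\to1$, so the $\hat\theta^D$ contribution to the mean-variance tradeoff stays finite.

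The deflator is then the stochastic exponential $\mathcal{E}(-\int\lambda\,dw^D)$. It is strictly positive on $[0,1]$ since its exponent is finite a.s. The baseline term $\check r\,dt+\sigma dw_t$ is handled by the usual Girsanov step under $\check{W}^T_b$. This gives NUPBR. The argument is essentially the initial-enlargement argument of \cite{ankimk} (Theorem 2.10 and after), read with $\{B=b\}$ as the enlarging information.

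\emph{Failure of NFLVR.} Suppose for contradiction that an equivalent local martingale measure $Q\sim W^D_b\times W_b|_1$ existed on $[0,1]$. Under $W_b$, resolution (\ref{resovlingls}) forces $\hat\pi^b_t\to1$ as $t\to1$, because $\hat l\approx l$ and $|l^{\frac{+}{-}}_t|\to\infty$ (Item-\ref{convergent}). Under $Q$, by equivalence, the same convergence holds almost surely.

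To derive a contradiction, use the price's own $B$-component $P_t:=X_\Delta\hat\pi^{+}_t$, a bounded $Q$-local martingale and hence a true martingale. Optional stopping along $t\to1$ then gives $Q$-expected terminal value equal to $P_0$, which is strictly inside $(0,X_\Delta)$ since $0<\hat\pi^+_0<1$. But the terminal value is deterministic: $X_\Delta$ if $b=+$, and $0$ if $b=-$. This contradicts $0<P_0<X_\Delta$. Explicitly, the buy-and-hold position in $P$ (for $b=+$, or the short for $b=-$) is a classical arbitrage: its payoff is sure only at disclosure, its losses before disclosure are bounded, and it is not scalable without interim credit. So NCA fails, and with it NFLVR, leaving only NUPBR. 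Continuity of $\{\pi^\pm_t\}$ is used here to exclude a pre-disclosure jump, which would otherwise realise the payoff, so that the arbitrage is genuinely the unscalable kind.

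\emph{Main obstacle.} I expect the hardest step to be the first: making the mean-variance tradeoff finite at $t\to1$ for the \emph{RNE} beliefs $\hat\pi$ rather than $\pi$, where $\sigma^{lD}_t$ diverges. My first attempt will rewrite $\hat\pi^{\overline b}_t$ as $\pi^{\overline b}_t e^{O(|\hat l-l|)}$ and use (\ref{finitereldrft}) to bound $\sup_t|\hat l_t-l_t|$ almost surely. This may need a localisation argument, since (\ref{finitereldrft}) is only an $L^1$ statement.
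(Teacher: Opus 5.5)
Your NUPBR half is the paper's own argument. Rewriting (\ref{fulldynamics}) under \(\{B=b\}\) in the original noise \(dw^D_t\) gives the inference price-of-risk \(-\big[\frac{1}{2}\hat{\theta}^D_t/\sigma^D_t+\hat{\pi}^+_t\sigma^{lD}_t\big]\) (for \(b=-\)). This is square-integrable by (\ref{novs}), and by (\ref{orders}) together with footnote \ref{pisignaltonoise}; the structure condition of \cite{fonz1} then gives NUPBR. The obstacle you anticipate is not one. Since \(d(\hat{l}^{\frac{+}{-}}_t-l^{\frac{+}{-}}_t)=-\frac{1}{2}(\hat{\theta}^D_t/r^D_{\Delta,t})(\sigma^{lD}_t)^2dt\), the quantity \(\sup_{t<1}|\hat{l}^{\frac{+}{-}}_t-l^{\frac{+}{-}}_t|\) is bounded pathwise by \(|\hat{l}^{\frac{+}{-}}_0-l^{\frac{+}{-}}_0|\) plus a nonnegative integral whose expectation is finite by (\ref{finitereldrft}). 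It is therefore finite a.s., with no localisation needed. Then \(\hat{\pi}^{\overline{b}}_t\le e^{|\hat{l}^{\frac{+}{-}}_t-l^{\frac{+}{-}}_t|}\pi^{\overline{b}}_t\) transfers the footnote bound to the RNE beliefs.

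The gap is in the \emph{only} half. You treat \(P_t=X_\Delta\hat{\pi}^+_t\) as a \(Q\)-local martingale. But an ELMM for the traded price makes only \(S_t\), and integrals against it, local martingales, not its \(B\)-component. With \(S\) as the sole risky asset, this step genuinely fails. In the realised basis, \(dS_t=a_tdt+\sigma dw_t+X_\Delta(\sigma^{\hat{\pi}}_t)^2\sigma^{lD}_tdw^D_t\), where \(a_t\) is \(\{\mathcal{F}^D_t\}\)-adapted and \(w\) is independent of \(\mathcal{F}^D_1\) under \(\mathbf{W}_b\). Conditionally on \(\mathcal{F}^D_1\), the density \(\mathcal{E}(-\int a/\sigma\,dw)_1\) is then a Gaussian exponential with mean one whenever \(\int_0^1a_t^2dt<\infty\) a.s. This holds, e.g., for canonical pricing with the clock \(\Phi(t)=t/(1-t)\). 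So the density defines an ELMM for \(S\): the benefit noise masks the sure convergence of \(\hat{\pi}^+_t\), and \(S\) alone need not violate NFLVR.

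What rescues your argument is that the \(B\)-component be tradable. The paper provides this through its standing complete-market assumption, or a traded \(B\)-sure asset (cf.\ Section \ref{canonbiased}). Then \(X_\Delta\mathbb{I}_{\{B=+\}}\) is replicated in the inferential basis by an \(\{\mathbf{F}_t\}\)-predictable strategy with value \(X_\Delta\hat{\pi}^+_t\). Since \(\mathbf{W}_b\ll\pi^+_0\mathbf{W}_++\pi^-_0\mathbf{W}_-\), the same strategy has the same value process in the realised basis. It is therefore a bounded \(Q\)-local martingale, and your contradiction goes through. State this hypothesis explicitly. The paper's proof does not argue this half at all (it rests on Item-\ref{convergent} and Theorem 1.12 of \cite{fonz3}), so once patched your argument is a useful explicit complement. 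Note also that continuity of \(\{\pi^\pm_t\}\) plays no role in this half. Its role is to make the continuous-semimartingale structure condition the right NUPBR criterion in the first half.
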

\begin{proof}
It is a direct take of Part-(1) of Theorem 1.12 of \cite{fonz3} (or Theorem 3.2 of \cite{sideinfo2}). We show a more explicit route below, with applications in mind. 

On \([0,1]\), with RNE law \(\hat{\pi}^B_0(\hat{W}^D_B\times\check{W}^T_B|_{1})\), asset-price \(S_t=S^-_t+X_\Delta\cdot\hat{\pi}^+_t\) 
obeys dynamic (\ref{fulldynamics}). 
For the RNE and inferential belief processes, adjacency (\ref{finitereldrft}-\ref{novs}) implies:\begin{equation}\label{orders}|\pi^+_t-\hat{\pi}^+_t|=\mathcal{O}((\sigma^\pi_t)^2).
\end{equation}The price dynamic in the realised basis under say \(\{B=-\}\) is given by setting \(B=-\) in (\ref{fulldynamics}) and re-writing it in the original standard Wiener noise \(\{d{w}^{D}_{t}\}\); by (\ref{findRNE}) then, price drifts due to risk-inference, 
 written as price-of-diffusion-risk, read:\[-\big[\frac{1}{2}\frac{\hat{\theta}^D_t}{\sigma^D_t}+\hat{\pi}^+_t(-)\cdot{\sigma^{lD}_t}\big];
 \]both 
 finitely square-integrable, 
 the first by (\ref{finitereldrft}-\ref{novs}), 
the second, (\ref{orders}) and footnote-\ref{pisignaltonoise}, thus meeting the 
condition for NUPBR (Corollary 3.3.15, \cite{fonz1}). 
\end{proof}

   

\subsection{Biased \(B\)-sure Beliefs and Bounded Total Bias}\label{weakp'}

As shown in Appendix \ref{coro1proof}, only canonical asset-prices can satisfy the economic risk-pricing demand of (\ref{kcon}). Markets, however, may not 'heed' this for a \(B\)-risk that can be hedged and so priced by arbitrage; the more general price-dynamic of (\ref{fulldynamics}) cannot be ruled out.

\begin{enumerate}

 \item \emph{Noisy Canonical Pricing.}\label{noisycan} 
 Any given price process with equation of motion (\ref{fulldynamics}) has an 'implied price-of-\(B\)-risk parameter' \(\hat{K}_t\) at any \(t\le 1\) given by ((\ref{corolaw})):\begin{equation}\label{cumulativeeffect}
     \hat{K}_t:=E\text{ }\frac{L^{\frac{+}{-}}_t}{\hat{L}^{\frac{+}{-}}_t}=\frac{L^{\frac{+}{-}}_0}{\hat{L}^{\frac{+}{-}}_0}\cdot 
     E\text{ }\exp{\frac{1}{2}\int_0^{t}\frac{\hat{\theta}^D_{u}}{r^{D}_{\Delta,u}}\cdot(\sigma^{lD}_u)^2du}
     .
 \end{equation}Canonical pricing has \(\hat{\theta}^{D}_{u}=0\) and so \(\hat{K}_{t}\) constant. Any pricing for which \(\hat{K}_{t}=\hat{K}_0\equiv K\) on average can be interpreted as canonical but 'noisy'.
 
 \item\emph{Biased Canonical Pricing.}\label{biasedcan} If (\ref{cumulativeeffect}) deviates from \(\hat{K}_0\) systematically, it is more sensible to attribute it to erroneous \(B\)-sure beliefs, not noise or irrational preference. Such pricing can be seen as canonical under reference \(B\)-sure laws \(\hat{W}^D_B\) that are biased \emph{vs} the 'true' ones \(W^D_B
 \): 
minimal viability ensures that bias effect (\ref{cumulativeeffect}), or (\ref{corolaw}), is finite as \(t\to 1\).

\end{enumerate}

\end{appendix}


\end{document}